\documentclass[letterpaper,10pt]{article}

\usepackage[preprint]{probml}

\ShortHeadings{Intention Inference}

\usepackage{wrapfig}
\usepackage{booktabs}

\usepackage{bm,amssymb,amsmath}

\makeatletter
\def\th@plain{%
  \thm@notefont{}
  \itshape 
}
\def\th@definition{%
  \thm@notefont{}
  \normalfont 
}
\makeatother

\def\1{\bm{1}}

\DeclareMathAlphabet{\mathsfit}{\encodingdefault}{\sfdefault}{m}{sl}
\SetMathAlphabet{\mathsfit}{bold}{\encodingdefault}{\sfdefault}{bx}{n}

\begin{document}

\title{
  Intention Inference Under Execution Noise: Separating Aleatoric and Epistemic Uncertainty in Social Dilemmas
}

\author[a,c,$\dagger$]{Kival Mahadew}
\author[b,c,d,e]{Jonathan Shock}

\affil[a]{Dept.\ of Computer Science, University of Cape Town}
\affil[b]{Dept.\ of Mathematics \& Applied Mathematics, University of Cape Town}
\affil[c]{Neuroscience Institute, University of Cape Town}
\affil[d]{National Institute for Theoretical \& Computational Sciences, Stellenbosch}
\affil[e]{Institut National de la recherche scientifique, Montreal}
\affil[$\dagger$]{Correspondence to \url{kivalm@protonmail.com}}

\maketitle

\begin{abstract}
  In noisy social dilemmas, intended actions are stochastically corrupted before execution, so an observed defection may reflect hostile intent or action error. Standard Markov Decision Process (MDP) formulations treat executed actions as states, structurally precluding this distinction and causing systematic over-retaliation. We introduce a Partially Observable MDP (POMDP) formulation encoding opponent intentions as latent states and executed actions as noisy observations, solved within the active inference (AIF) framework with a cost function that decomposes into epistemic and pragmatic components that jointly address inferring current intent and learning how intent evolves. In the Iterated Prisoner’s Dilemma with symmetric noise, we derive a critical noise threshold governing cooperation collapse, connecting it to a fixed-point condition on learned priors. Experiments reveal that the value of intention inference is context-dependent: the POMDP provides consistent advantages against conditionally cooperative opponents, but mutual intention inference under sufficient noise produces correlated belief-driven collapse. The advantage is specific to games where intent attribution is decision-relevant.
\end{abstract}

\section{Introduction}
\label{sec:intro}
The emergence and stability of cooperation among self-interested agents is a foundational problem across economics, biology, and artificial intelligence. The canonical formalisation is the Iterated Prisoner's Dilemma (IPD), where conditional strategies can sustain mutual cooperation provided agents reliably observe each other's choices. Under execution noise, where intended actions are stochastically flipped before enactment, this condition fails. An agent observing defection faces an attribution problem: deliberate hostility, or unintended error?

This attribution problem is more consequential than it first appears, because noise does not merely add variance to outcomes but restructures the learning dynamics themselves. Agents that respond to corrupted observations as though they were deliberate create non-stationarity for their partners, who adapt in turn, producing a volatile feedback loop in which noise, misattribution, and adaptation compound one another. The classical literature addresses this through engineered forgiveness: Generous Tit-for-Tat cooperates probabilistically after observed defection, effectively averaging over causes. While practically effective, such approaches lack a principled account of how much forgiveness the evidence warrants and cannot adapt to the specific opponent. Reinforcement learning offers generality, but standard MDP formulations represent the IPD with executed actions as state, providing no room for latent intention variables. The agent structurally cannot form the hypothesis 'my opponent intended to cooperate but executed incorrectly,' leading to systematic over-retaliation.

We argue the correct formulation is a POMDP where hidden states encode latent intentions and observations correspond to noisy executed actions. Belief updating then constitutes probabilistic inference about what the opponent meant, conditioned on what they did, giving us a principled, Bayesian account of forgiveness. We solve this within Active Inference (AIF), whose expected free energy decomposition provides natural machinery for the dual problem of inferring current intent and discovering how intent evolves: epistemic value drives information-seeking actions that resolve strategic uncertainty, while pragmatic value drives goal-directed exploitation of learned structure.

We make four contributions. (1) A POMDP formulation separating aleatoric execution noise from epistemic uncertainty about intent, with preferences over observations consistent with the payoff structure. (2) Online learning of the opponent's reactive policy via AIF, coupling state inference with model learning through Dirichlet count accumulation. (3) An analysis of the interacting mechanisms that give rise to a noise threshold beyond which mutual intention inference produces correlated failure rather than cooperation. (4) Empirical demonstration across opponent types revealing that the value of intention inference is context-dependent: beneficial against reciprocating opponents, fragile under mutual inference at high noise, and costly against exploitable opponents.

\section{Related Work}
\textbf{Cooperation under execution noise.}
The destabilising effect of execution noise on reciprocal cooperation has been studied since
\cite{molanderOptimalLevelGenerosity1985} showed that Tit-for-Tat collapses under arbitrarily small error rates. Subsequent work introduced fixed probabilistic forgiveness~\citep{glynatsiPropertiesWinningIterated2024}, internal standing variables distinguishing execution from perception errors~\citep{boerlijstLogicContrition1997}, symbolic pattern matching over accident versus intention~\citep{auAccidentIntentionThat2006}, and memory-1 strategies jointly satisfying cooperation, error correction, and retaliation~\citep{hilbeMemorynStrategiesDirect2017}. Our POMDP agent generalises these fixed-memory approaches: belief states provide adaptive effective memory, and Bayesian updating replaces engineered forgiveness with a graded posterior over cooperative intent.

\textbf{Opponent modelling and intent inference.}
Interactive POMDPs~\citep{gmytrasiewiczFrameworkSequentialPlanning2005} embed recursive opponent models into the state space, yielding optimal but intractable planning; our Dirichlet posterior over the opponent's reactive policy is a computationally tractable, non-recursive special case. Bayesian Theory of Mind~\citep{bakerActionUnderstandingInverse2009} and neural meta-learning approaches~\citep{rabinowitzMachineTheoryMind2018} both treat opponent mental states as hidden causes of observed behaviour, but neither separates execution noise from strategic intent. \cite{foersterLearningOpponentLearningAwareness2018} study learning dynamics under co-adaptation through differentiable opponent models; our self-play collapse provides a complementary perspective from the Bayesian attribution side.

\textbf{Active inference in strategic settings.}
\cite{yoshidaGameTheoryMind2008} first connected variational inference to game-theoretic interaction, and \cite{fristonDuetOne2015} cast Theory of Mind as active inference over a shared generative model. \cite{demekasAnalyticalModelActive2023} provide an analytical MDP model of the IPD from which our MDP-AIF baseline follows, deriving phase transition conditions for cooperation but without execution noise. \cite{ruiz-serraFactorisedActiveInference2024} introduce factorised generative models with explicit beliefs about opponents' internal states in iterated general-sum games. Our work complements these by making execution noise the central challenge and showing that the representational choice between latent intentions and observed actions has qualitative consequences for cooperation and inference stability.

\textbf{Uncertainty decomposition.}
The aleatoric/epistemic distinction is well-established in supervised learning~\citep{kendallWhatUncertaintiesWe2017} and has been applied to risk-sensitive RL through information-theoretic decompositions~\citep{depewegDecompositionUncertaintyBayesian2018}. \cite{ghoshWhyGeneralizationRL2021} show that epistemic uncertainty transforms MDPs into implicit POMDPs requiring information-gathering actions. In the multi-agent setting, \cite{diaconescuInferringIntentionsOthers2014} use hierarchical Bayesian learning to separate observation noise from an adviser's underlying intentions, achieving an analogous decomposition in a social learning task. Our contribution connects these threads: the first POMDP formulation combining the game-theoretic recognition that noise types have different strategic consequences, with Bayesian uncertainty separation and active inference planning.

\section{The Latent Intention POMDP}
\label{sec:pomdp}

We model execution noise as a binary symmetric channel with parameter $\epsilon \in [0, 0.5)$: each player's intended action is independently flipped with probability $\epsilon$. We use the standard IPD payoffs $T{=}5, R{=}3, P{=}1, S{=}0$.

The POMDP $\langle \mathcal{S}, \mathcal{O}, \mathcal{A}, \mathbf{A}, \mathbf{B}, \mathbf{C} \rangle$ is defined as follows. \textbf{States} $\mathcal{S} = \{\text{Start}, CC, CD, DC, DD\}$ encode joint intended actions prior to execution; Start is occupied only at $t{=}0$. \textbf{Observations} $\mathcal{O} = \{\text{start}, CC, CD, DC, DD\}$ are the executed joint actions. \textbf{Actions} $\mathcal{A} = \{C, D\}$ are the focal agent's intended action.

The \textbf{observation model} ($\mathbf{A}$ matrix) $P(o \mid s)$ is fixed and known, parameterised by $\epsilon$. Each player's intended action is independently flipped, yielding: $P(\text{correct joint action} \mid s) = (1{-}\epsilon)^2$, $P(\text{one flip} \mid s) = \epsilon(1{-}\epsilon)$, $P(\text{both flipped} \mid s) = \epsilon^2$.

The \textbf{transition model} ($\mathbf{B}$ matrix) $P(s_{t+1} \mid s_t, a_t)$ captures the opponent's reactive policy. Here $s_t = (a_{t-1}, \iota^{\text{opp}}_t)$ is the joint intention state at round $t$ and $a_t \in \{C, D\}$ is the focal agent's intended action for round $t{+}1$, selected after observing $o_t$. The stochastic content of this transition is entirely the opponent's next intention $\iota^{\text{opp}}_{t+1}$, since the agent's component of $s_{t+1}$ is deterministically $a_t$. This yields the factorisation $P(s_{t+1} \mid s_t, a_t) = P(\iota^{\text{opp}}_{t+1} \mid s_t, a_t) \cdot \mathbf{1}[\iota^{\text{self}}_{t+1} = a_t]$. Due to simultaneous play, the opponent's intention at $t{+}1$ depends on the joint outcome at $t$ (a noisy function of $s_t$) but cannot depend on $a_t$, since $o_t$ is determined before $a_t$ is chosen; the agent's generative model retains the $(s, a)$ indexing because $a_t$ determines which pair of next states ($\{a_t C, a_t D\}$) is reachable. The $\mathbf{B}$ matrix is unknown and learned online (\cref{sec:online-learning}).

The \textbf{preference vector} $\mathbf{C}(o) = \ln \tilde{P}(o)$ assigns log preferences proportional to payoffs over observations: $C(DC) > C(CC) > C(DD) > C(CD)$, reflecting that payoffs are determined by executed actions, not intentions.

\section{Active Inference Solution}
\label{sec:aif}

\subsection{Background}
\label{sec:aif-background}

Active inference is a Bayesian framework, originating in theoretical neuroscience, that casts
perception, action, and learning as minimising uncertainty under a generative model of the
environment \citep{fristonActiveInferenceCuriosity2017}. An agent maintains a probabilistic model of how hidden
states produce observations and how actions drive state transitions, then selects actions that
jointly resolve uncertainty about the world and steer observations toward preferred outcomes.
Inference over hidden states minimises the \emph{variational free energy} $F$, the negative
evidence lower bound (ELBO). Action selection minimises its forward-looking counterpart, the
\emph{expected free energy}, whose decomposition into epistemic and pragmatic components provides
a principled balance between exploration and exploitation without requiring a separate
exploration mechanism.

This framework suits the noisy IPD for two reasons. First, the generative model naturally
accommodates the POMDP structure of \cref{sec:pomdp}: hidden intention states, a known
observation model encoding execution noise, and an unknown transition model encoding the
opponent's policy. Second, the expected free energy's epistemic terms provide exactly the
machinery needed to actively learn the opponent's reactive policy through noise, rather than
passively accumulating experience.

\subsection{Generative Model}
\label{sec:generative-model}

The generative model specifies the joint distribution:
\[
  P(\tilde{o}_{1:T}, \tilde{s}_{1:T}, \pi, \mathbf{B})
  = P(\pi)\, P(\mathbf{B}) \prod_{t=1}^{T} P(o_t \mid s_t, \mathbf{A})\, P(s_t \mid s_{t-1}, a_{t-1}, \mathbf{B})
\]
where $\pi$ denotes a policy and $\mathbf{A}$ is fixed. The approximate posterior factorises as
$Q(\tilde{s}, \pi, \mathbf{B}) = Q(\pi)\, Q(\mathbf{B}) \prod_t Q(s_t)$.
\Cref{app:notation} maps the active-inference symbols used here (the likelihood $\mathbf{A}$, transition $\mathbf{B}$, posterior $Q$, and free energies $F$ and $G$) to their standard probabilistic-ML analogues.

\subsection{Belief Updating and Policy Selection}
\label{sec:belief-updating}

Beliefs update via Bayesian filtering:
\[
  Q(s_t) \propto P(o_t \mid s_t, \mathbf{A}) \sum_{s_{t-1}} P(s_t \mid s_{t-1}, a_{t-1}, \mathbf{B})\, Q(s_{t-1})
\]
This implements the key inference: given observed defection, noise $\epsilon$, and prior beliefs about the opponent, what is the posterior probability of cooperative intent?

Policy selection minimises the expected free energy $G(\pi)$, which decomposes at future timestep $\tau$ into pragmatic and epistemic terms
\begin{align}
  G(\pi, \tau)
   & = \underbrace{-\mathbb{E}_{Q(o_\tau, s_\tau | \pi)}\!\left[\ln P(o_\tau \mid s_\tau)
  - \ln Q(s_\tau \mid \pi)\right]}_{\text{State information gain}} \nonumber                 \\
   & \quad \underbrace{- \mathbb{E}_{Q(o_\tau | \pi)}\!\left[\ln \tilde{P}(o_\tau)\right]}_{
  \text{Pragmatic value}} \nonumber                                                          \\
   & \quad \underbrace{- \mathbb{E}_{Q(o_\tau, s_\tau | \pi)}\!\left[D_{\mathrm{KL}}\!\left[
        Q(\mathbf{B} \mid o_\tau, s_\tau) \,\|\, Q(\mathbf{B})\right]\right]}_{
    \text{Parameter information gain}}
  \label{eq:efe}
\end{align}

\textbf{Pragmatic value} drives the agent toward preferred (high-payoff) observations.
\textbf{State information gain} measures how much an action would reduce uncertainty about
the current hidden state, favouring actions whose outcomes are most diagnostic of latent
intentions. \textbf{Parameter information gain} measures how much an action would reduce
uncertainty about the transition model, favouring actions that are informative about the
opponent's reactive policy. The interplay between these terms and the two formulations is
examined empirically in \cref{sec:efe-decomp}.

\subsection{Online Learning of Transition Dynamics}
\label{sec:online-learning}

The $\mathbf{B}$ matrix is learned online via Dirichlet priors over each conditional
$P(s' \mid s, a)$, parameterised by observation counts $N^a_{ss'}$ under a symmetric
$\alpha_0 = 1$ (uniform) prior. The estimate is the Dirichlet posterior mean, equivalently the
posterior predictive for the next state, which applies add-one (Laplace) smoothing to the counts:
\[
  \hat{P}(s' \mid s, a) = \frac{N^a_{ss'} + \alpha_0}{\sum_{s''} \bigl(N^a_{ss''} + \alpha_0\bigr)},
  \qquad \alpha_0 = 1.
\]
Counts are updated at a tunable interval
$\Delta t$ optimised via hyperparameter search. This constitutes online Bayesian inference over
the opponent's reactive policy, coupled with state inference: accurate intention estimates
provide a clean training signal for the Dirichlet counts, and an accurate $\mathbf{B}$ matrix
in turn improves state predictions.

\section{Critical Noise Threshold}
\label{sec:threshold}

We derive the conditions under which cooperation survives noisy execution in the POMDP formulation. The opponent's cooperative prior from state $s$ at timestep $t$ is the Dirichlet posterior mean of the learned counts, using the same add-one convention as \cref{sec:online-learning}:
\[
  p_t(s) = \frac{N^C_{s \to CC} + \alpha_0}{\sum_{s'} \bigl(N^C_{s \to s'} + \alpha_0\bigr)},
  \qquad \alpha_0 = 1
\]
An agent with planning horizon $h$ does not evaluate a single observation against this prior but aggregates evidence over $h$ rounds, asking whether the observed pattern of defections is better explained by noise or genuine hostility. We formalise this as a binary hypothesis test. Suppose the agent observes $k$ opponent defections in $h$ consecutive rounds from state CC. Under the cooperative hypothesis $H_C$, the opponent's intention remains cooperative throughout the window with probability $p_t(\text{CC})^h$, and observed defections arise solely from the execution channel: $k \sim \text{Bin}(h, \varepsilon)$. Under the hostile hypothesis $H_D$, with probability $1 - p_t(\text{CC})^h$, observed defections arise at rate $1 - \varepsilon$.

\begin{proposition}[$h$-step forgiveness condition]
  \label{prop:hstep}
  For planning horizon $h$, noise level $\varepsilon$, and cooperative prior $p = p_t(\text{CC})$, the maximum number of observed defections attributed to noise is
  \begin{equation}
    k^*(h, p, \varepsilon)
    \;=\;
    \left\lfloor
    \frac{1}{2}\!\left(
    h \;+\;
    \frac{\ln\!\bigl(p^h / (1 - p^h)\bigr)}{\ln\!\bigl((1-\varepsilon)/\varepsilon\bigr)}
    \right)
    \right\rfloor
    \label{eq:kstar}
  \end{equation}
    The two terms inside the floor separate the competing effects: the $h/2$ term is the symmetric midpoint at which noise alone leaves the evidence balanced, while the second sets the prior cooperative-survival odds $p^h/(1-p^h)$ against the per-observation discriminability $\ln\bigl((1-\varepsilon)/\varepsilon\bigr)$ of the channel.
  At $h = 1$, this recovers the myopic attribution boundary: $k^* \geq 1$ iff $p > 1 - \varepsilon$, which is equivalent to requiring $P(\text{int.}\ C \mid \text{obs.}\ D) > 0.5$ under the binary symmetric channel (\cref{app:myopic-boundary}).
\end{proposition}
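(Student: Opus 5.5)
We plan to prove this by setting up the binary hypothesis test described just before the statement and solving the resulting posterior-odds inequality for $k$. Under $H_C$, with prior weight $p^h$, the likelihood of observing $k$ defections in $h$ rounds is $\binom{h}{k}\varepsilon^k(1-\varepsilon)^{h-k}$. Under $H_D$, with prior weight $1-p^h$, it is $\binom{h}{k}(1-\varepsilon)^k\varepsilon^{h-k}$. We say the $k$ defections are \emph{attributed to noise} when the posterior favours $H_C$, i.e.\ $P(H_C\mid k)\ge P(H_D\mid k)$. The binomial coefficients are common to both likelihoods and cancel, so the condition becomes
\[
\frac{p^h}{1-p^h}\,\Bigl(\frac{\varepsilon}{1-\varepsilon}\Bigr)^{k}\Bigl(\frac{1-\varepsilon}{\varepsilon}\Bigr)^{h-k}\;\ge\;1 .
\]

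Next we take logarithms. Write $L=\ln\bigl((1-\varepsilon)/\varepsilon\bigr)$; this is strictly positive because $\varepsilon\in[0,0.5)$. We treat the case $\varepsilon=0$ separately as the limit $L=\infty$, where no defection is attributed to noise unless $p=1$. Taking logs gives $\ln\bigl(p^h/(1-p^h)\bigr)+(h-2k)L\ge 0$. Dividing by $2L>0$ preserves the direction of the inequality and yields
\[
k\le \tfrac12\Bigl(h+\ln\bigl(p^h/(1-p^h)\bigr)/L\Bigr).
\]
The left-hand side of the log-odds is strictly decreasing in $k$, so the set of admissible $k$ is a down-set of integers. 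Its maximum is therefore the floor in \eqref{eq:kstar}.

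The two readings in the statement then follow directly. When $p^h=1/2$ the log prior odds vanish and the threshold is $h/2$, which justifies the "symmetric midpoint" interpretation. The second term is the prior log-odds measured in units of the per-observation discriminability $L$.

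For $h=1$, the condition $k^*\ge1$ reads $\tfrac12\bigl(1+\ln(p/(1-p))/L\bigr)\ge 1$. This is equivalent to $\ln\bigl(p/(1-p)\bigr)\ge L=\ln\bigl((1-\varepsilon)/\varepsilon\bigr)$. Because the logit is monotone, this in turn is equivalent to $p\ge 1-\varepsilon$. To connect this with the myopic posterior, we compute directly that $P(\text{int.}\ C\mid\text{obs.}\ D)=p\varepsilon/\bigl(p\varepsilon+(1-p)(1-\varepsilon)\bigr)$. This exceeds $1/2$ exactly when $p\varepsilon>(1-p)(1-\varepsilon)$, i.e.\ when $p>1-\varepsilon$, which matches the condition above. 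This is the content deferred to \cref{app:myopic-boundary}, and we would reproduce it there.

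The main obstacle is the boundary convention rather than the algebra. When equality holds, the two hypotheses are exactly balanced and the argument of the floor is an integer. Whether $k^*\ge1$ then holds under $p\ge1-\varepsilon$ or only under $p>1-\varepsilon$ depends on whether ties count as attributed to noise. We would fix the convention so that ties are attributed to noise, making the floor formula exact. We would then note that the strict version stated at $h=1$ agrees with it except on the measure-zero set $p=1-\varepsilon$. A secondary point to state explicitly is that the two-hypothesis model is a deliberate simplification: it collapses the mixture over intention switch-points into a single all-hostile alternative. The proposition is exact for this test, not for the full POMDP posterior.
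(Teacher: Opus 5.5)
Your proposal is correct and follows essentially the same route as the paper's proof: compare the weighted likelihoods $p^h\varepsilon^k(1-\varepsilon)^{h-k}$ and $(1-p^h)(1-\varepsilon)^k\varepsilon^{h-k}$, take logarithms, divide by $\ln\bigl((1-\varepsilon)/\varepsilon\bigr)>0$, and take the largest admissible integer. Two of your steps are more careful than the paper's: your $h=1$ step reduces $k^*\ge 1$ to $\ln\bigl(p/(1-p)\bigr)\ge\ln\bigl((1-\varepsilon)/\varepsilon\bigr)$, whereas the paper reduces it to $p>\tfrac12$, a condition merely implied by $p>1-\varepsilon$; and your tie convention makes the floor exact, whereas the paper's strict inequality makes the largest admissible integer fall one below the floor whenever its argument is an integer.
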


The proof is given in \cref{app:self-play-collapse}. Two quantities govern whether cooperation is sustained, and they can diverge. The first is \emph{discriminability}, measured by $k^*$: how many noisy defections the agent can correctly attribute to the execution channel within a single planning window. Since $k^*$ is non-decreasing in $h$, longer planning horizons always improve the agent's ability to tolerate individual noisy observations. The second is the \emph{cooperative survival probability} $p_t(\text{CC})^h$: the probability that cooperative intent persists across the full planning window. Because $p_t(\text{CC}) < 1$, this quantity decays exponentially in $h$, so longer horizons make it less likely that the cooperative hypothesis applies in the first place.

The critical threshold is the noise level at which the cooperative survival probability crosses below $0.5$ for the agent's planning horizon: the point where the hostile hypothesis becomes the typical case over the planning window regardless of discriminability. When this occurs, the agent retaliates not because it cannot distinguish noise from hostility but because genuinely hostile intent is the more probable explanation. The forgiveness machinery remains accurate but addresses the minority case. \Cref{sec:attribution-verification} verifies this prediction empirically, and \cref{app:self-play-collapse} traces the belief-driven cascade by which noise-induced retaliation becomes self-reinforcing under mutual inference.

This binary partition reflects the game's strategic structure: rational play in the IPD drives behaviour toward the symmetric equilibria (CC or DD), and the bistable equilibrium structure (\cref{app:self-play-collapse}) concentrates converged dynamics at these extremes. A full mixture over cooperative-count hypotheses (\cref{app:mixed-intent}) yields a weakly lower threshold, but the gap is zero at every noise level where cooperation is sustained and at most one defection where it has already collapsed.



\section{Experiments}
\label{sec:experiments}

\subsection{Setup}
\label{sec:setup}

We evaluate POMDP-AIF (proposed) against MDP-AIF (standard active inference with executed actions as states, following the formulation of \citet{demekasAnalyticalModelActive2023}) across three conditions: independent self-play, vs Tit-for-Tat (TFT), and vs Win-Stay Lose-Shift (WSLS). Noise levels: $\epsilon \in \{0.0, 0.05, 0.10, 0.15, 0.20, 0.25\}$. The A matrix is fixed; the B matrix is initialized with uniform Dirichlet prior and updated at interval $\Delta t$ tuned on self-play and vs TFT\@. Results are averaged over 30 seeds, 5000 timesteps per episode.

\subsection{Cooperation and Reward Under Noise}
\label{sec:coop-reward}

\begin{figure}[t]
  \centering
  \includegraphics[width=\linewidth]{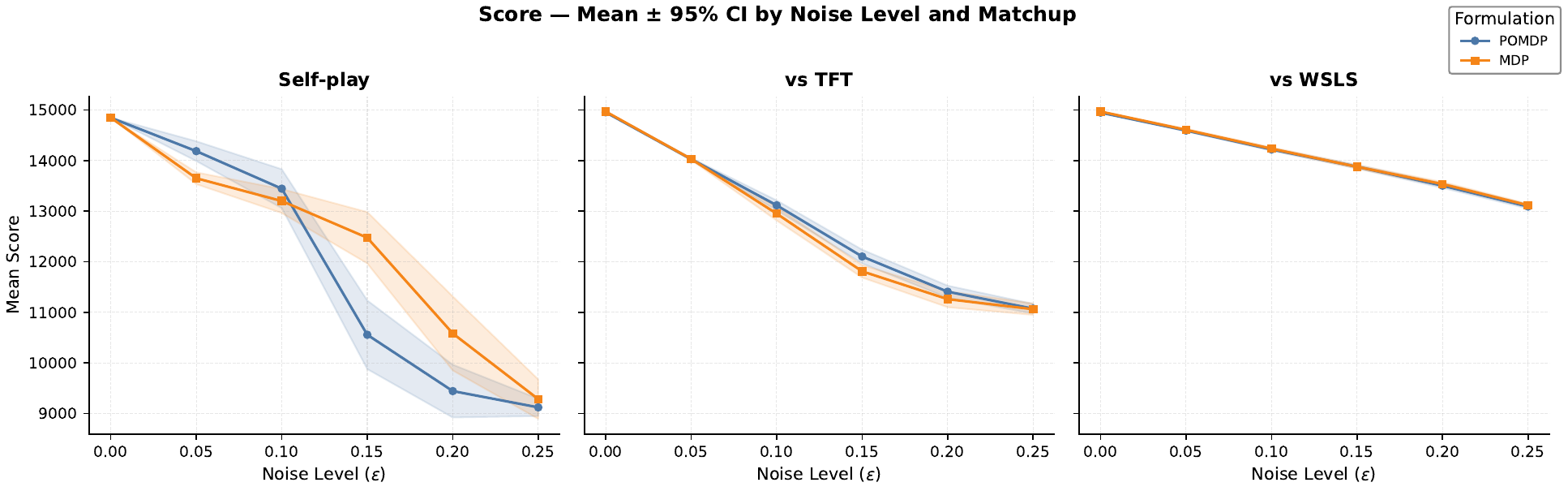}
  \caption{Mean cumulative score ($\pm$ 95\% CI, 30 seeds, 5000 timesteps) as a function of noise level $\epsilon$. Left: self-play. Centre: vs TFT\@. Right: vs WSLS\@. The POMDP outperforms at low noise in self-play but collapses sharply between $\epsilon = 0.10$ and $0.15$; against TFT, it maintains a consistent advantage across all noise levels.}
  \label{fig:scores}
\end{figure}

\begin{figure}[t]
  \centering
  \includegraphics[width=\linewidth]{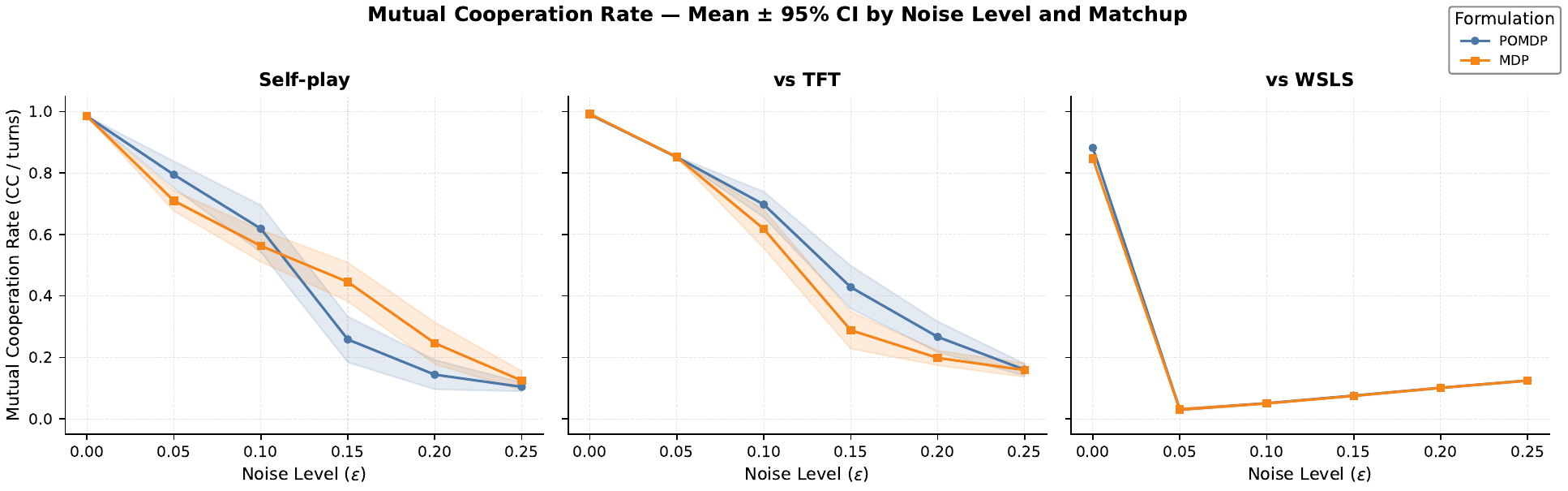}
  \caption{Mutual cooperation rate (fraction of CC outcomes, $\pm$ 95\% CI) as a function of $\epsilon$. Same layout as \cref{fig:scores}. Against WSLS, cooperation collapses at any $\epsilon > 0$ while scores remain high (\cref{fig:scores}), indicating exploitation rather than cooperation breakdown.}
  \label{fig:coop-rate}
\end{figure}

\textbf{Self-play.}
Both formulations achieve near-perfect cooperation in the noiseless setting (\cref{fig:scores,fig:coop-rate}).
At low noise ($\epsilon = 0.05$), the POMDP maintains a clear cooperation advantage, consistent with its ability to attribute observed defection to execution error rather than hostile intent.
The two formulations converge near $\epsilon = 0.10$, after which the POMDP undergoes a sharp collapse: between $\epsilon = 0.10$ and $0.15$, cooperation falls to $0.25$ and cumulative score drops below MDP levels.
The MDP degrades gradually across the full noise range.
This asymmetry reflects a qualitative difference in failure mode: the POMDP's collapse is driven by correlated belief dynamics in mutual inference (\cref{sec:attribution-verification}), while the MDP's smooth decline reflects the absence of belief machinery that could produce such a transition.
Wide confidence intervals at intermediate noise levels reflect the bimodal character of this transition across seeds.

\textbf{vs TFT.}
Against a stationary, conditionally cooperative opponent, the POMDP maintains a consistent advantage across all noise levels with no catastrophic collapse (\cref{fig:scores,fig:coop-rate}). TFT's fixed policy provides a stable learning target that cannot be destabilised by the focal agent's belief dynamics, so the correlated failure mode is structurally absent. Both formulations converge at high noise ($\epsilon = 0.25$) as execution noise overwhelms the cooperative signal regardless of representation.

\textbf{vs WSLS.}
Against WSLS, cooperation collapses at any $\epsilon > 0$ for both formulations, yet cumulative scores remain high (\cref{fig:scores,fig:coop-rate}). Both agents learn to exploit WSLS's payoff-responsive switching rule, generating a DC/DD cycle whose average payoff matches mutual cooperation. The POMDP's cooperative prior delays this transition to exploitation, so the MDP achieves marginally higher scores. This condition tests opponent-model adaptation speed rather than cooperation capacity, and shows that the POMDP's forgiving disposition becomes a liability against exploitable opponents.

\textbf{Scope of the advantage.} Two conditions bound this advantage: it vanishes without the epistemic terms (\cref{app:no-epistemic}), and it shrinks if intent attribution lacks decision-relevance, as in the Stag Hunt (\cref{app:stag-hunt}). \Cref{sec:discussion} details both.

\subsection{EFE Decomposition and Learning Dynamics}
\label{sec:efe-decomp}

\begin{figure}[t]
  \centering
  \includegraphics[width=\linewidth]{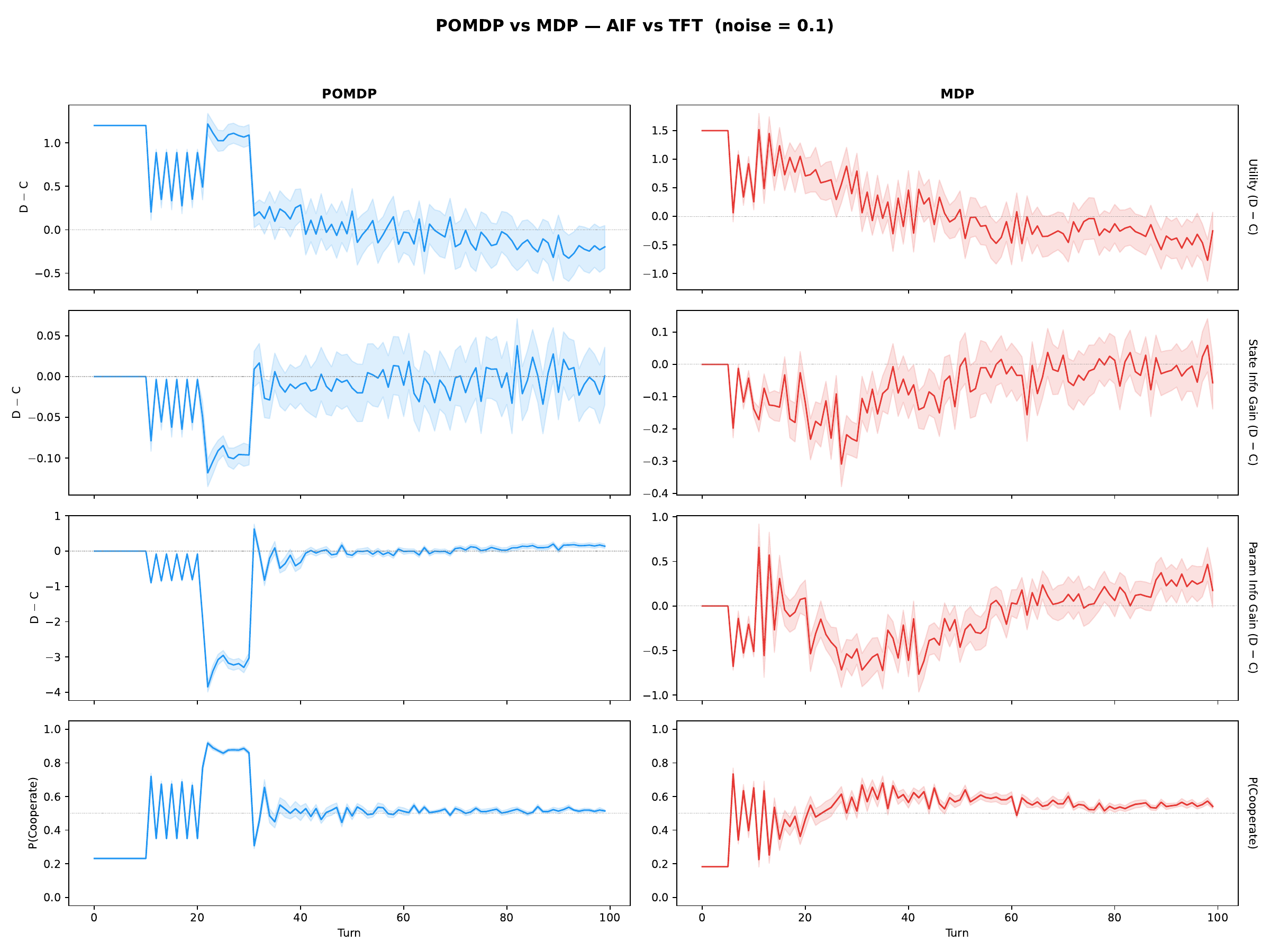}
  \caption{Expected free energy decomposition for POMDP-AIF (left, blue) and MDP-AIF (right, red) vs TFT at $\epsilon = 0.10$, averaged over 30 seeds with 95\% confidence intervals for the first 100 turns. Each row shows the difference D$-$C (positive favours defection) for one EFE component: pragmatic value (row 1), state information gain (row 2), parameter information gain (row 3), and the resulting P(Cooperate) (row 4).}
  \label{fig:efe-decomp}
\end{figure}

To understand what drives the divergence between formulations visible in \cref{fig:scores,fig:coop-rate}, we decompose the expected free energy into its constituent terms. \Cref{fig:efe-decomp} shows the D$-$C difference (positive favours defection) for each EFE component against TFT at $\epsilon = 0.10$, averaged over 30 seeds with 95\% confidence intervals.

\paragraph{Explore-then-commit in the POMDP.}
The POMDP agent exhibits a concentrated learning phase followed by stable commitment. Parameter information gain (row 3, left) drops sharply between turns 15 and 25, indicating that cooperation is far more informative about the transition model than defection during this window. This epistemic signal drives P(Cooperate) to $\approx 0.9$ (row 4). Once the Dirichlet counts converge, parameter information gain returns to zero and the agent settles into a cooperative policy sustained by pragmatic value alone.

\paragraph{Persistent indecision in the MDP.}
The MDP shows no comparable transition. Pragmatic value (row 1, right) initially favours defection and decays only gradually. State information gain (row 2, right) fluctuates throughout without stabilising, reflecting an unresolvable drive to clarify the current state. Parameter information gain (row 3, right) never produces the clean convergence signal that would enable policy commitment. The resulting P(Cooperate) drifts upward slowly, settling only as Dirichlet counts accumulate enough to overcome the initial defection bias.

\paragraph{Root cause.}
In the POMDP, the known observation model absorbs execution noise at the likelihood level, leaving latent intentions as a clean inference target. State information gain remains near zero throughout (row 2, left), and uncorrupted belief updates provide a reliable training signal for the transition model. In the MDP, observations serve directly as states, so aleatoric noise manifests as irreducible state uncertainty. State information gain never resolves to zero regardless of accumulated evidence, and this irreducibly elevated epistemic signal corrupts the Dirichlet counts, slowing convergence and preventing policy commitment. More generally, when epistemic drive fails to resolve in a stationary environment, the state representation is likely conflating reducible and irreducible uncertainty.

\subsection{Empirical Verification of the Noise Threshold}
\label{sec:attribution-verification}

To validate the critical noise threshold derived in \cref{sec:threshold}, we extract the Dirichlet cooperative prior $p_{500}(\text{CC})$ at $t = 500$ (10\% of the trajectory) and evaluate \cref{prop:hstep} at the self-play planning horizon $h = 5$. All priors are averaged across 30 seeds.

\begin{table}[h]
  \centering
  \caption{Multi-step attribution metrics at $h = 5$ (POMDP, self-play, $t = 500$). $k^*$: maximum forgivable defection count. $P(k \leq k^* \mid H_C)$ and $P(k \leq k^* \mid H_D)$: probability of observing a forgivable pattern under cooperative and hostile hypotheses.}
  \label{tab:multistep-verification}
  \begin{tabular}{cccccc}
    \toprule
    $\varepsilon$ & $p_{500}(\text{CC})$ & $p_{500}(\text{CC})^5$ & $k^*$ & $P(k \leq k^* \mid H_C)$ & $P(k \leq k^* \mid H_D)$ \\
    \midrule
    0.05          & 0.954                & 0.790                  & 2     & 99.9\%                   & 0.1\%                    \\
    0.10          & 0.907                & 0.614                  & 2     & 99.1\%                   & 0.9\%                    \\
    0.15          & 0.761                & 0.255                  & 2     & 97.3\%                   & 2.7\%                    \\
    \bottomrule
  \end{tabular}
\end{table}

The data confirm the central prediction of \cref{sec:threshold}: cooperation collapses because the cooperative survival probability degrades, not because discriminability fails. All three noise levels yield identical $k^*$ and discriminability gaps above 94 percentage points (\cref{tab:multistep-verification}), so the agent's ability to distinguish noise from hostility does not deteriorate across the phase transition.

What changes is the cooperative survival probability. At low noise the planning window is predominantly cooperative, and the empirical cooperation rate is high (\cref{fig:coop-rate}). As noise increases, $p_{500}(\text{CC})^5$ falls below $0.5$ and the hostile hypothesis dominates the planning horizon. The agent retaliates not because it confuses noise for hostility but because genuinely hostile intent is the more probable explanation over its planning window, leading to rational retaliation.

The MDP shows no comparable threshold because it lacks the latent intention layer over which \cref{prop:hstep} is defined. Without the capacity to form the cooperative and hostile hypotheses, the MDP degrades gradually through the mechanism described in \cref{sec:efe-decomp}. \Cref{app:self-play-collapse} provides the full state-dependent analysis, deriving the myopic attribution boundary as the $h = 1$ special case and tracing the belief-driven cascade by which noise-induced retaliation becomes self-reinforcing under mutual inference.

\section{Discussion and Conclusion}
\label{sec:discussion}

\paragraph{Representation and epistemic drive are jointly necessary.}
The ablation in \cref{app:no-epistemic} confirms that neither component suffices alone: removing epistemic terms eliminates the explore-then-commit trajectory even with the correct latent structure, while the MDP with full epistemic drive exhibits persistent exploratory pressure that never resolves into policy commitment. The cooperative advantages we report arise specifically from their interaction: the POMDP confines aleatoric uncertainty to the likelihood, creating an epistemically resolvable latent space, and the expected free energy's information gain terms drive the active probing needed to learn the opponent's policy through that space. This interdependence is the reason the explore-then-commit trajectory (\cref{sec:efe-decomp}) appears only in the POMDP with full EFE, and not in either component alone. The Stag Hunt results (\cref{app:stag-hunt}) reveal that this advantage is specific to games where the temptation payoff exceeds the reward for mutual cooperation ($T > R$): when there is no exploitation incentive, intent attribution is less decision-relevant and both formulations perform similarly.

\paragraph{The self-play collapse is a property of mutual inference, not of the formulation.}
The sharp transition between $\varepsilon = 0.10$ and $0.15$ is entirely absent against fixed opponents, confirming that it arises from correlated belief dynamics rather than from a deficiency in the POMDP model. The same sensitivity to inferred intent that enables near-perfect cooperation at low noise becomes a liability when noise-induced retaliation is correctly identified by both agents, making the spiral self-reinforcing. This is not a failure of inference but a consequence of its accuracy under mutual deployment, and it places a principled bound on the conditions under which intention inference improves cooperation.

\paragraph{Limitations and scope.}
The POMDP is given the correct noise model (the A matrix parameterised by the true $\varepsilon$), while the MDP has no access to this information. The comparison therefore involves both a structural advantage (the latent intention space) and an informational advantage (knowledge of $\varepsilon$). We argue the structural advantage is the more fundamental factor: without a latent intention layer there is no variable on which to condition, so knowledge of $\varepsilon$ alone would be inoperative. The A matrix is useful precisely because the POMDP has the representational structure to exploit it. A more controlled comparison including a noise-aware MDP baseline that discounts observed defections heuristically using the known $\varepsilon$ is left to future work. Such baselines exhibit various failure modes: for example, an MDP with a noise-representative prior recovers much of the POMDP's advantage but biases learning and underestimates the true noise level. The ablation in \cref{app:no-epistemic} partially isolates this by removing the POMDP's epistemic terms to separate the representational and informational contributions. Furthermore, a finer hyperparameter sweep (\cref{app:hyperparams}) may yield additional gains.

Our evaluation covers two-player, binary-action games with symmetric noise. The small state space permits exact inference; scaling to richer intention models would require approximate methods such as Monte Carlo tree search, since the POMDP agent's planning cost scales as $|A|^h$ for open-loop policy enumeration. The Dirichlet learning assumes stationary opponent policies, and the self-play results demonstrate the consequences when this assumption is violated: both agents' policies co-evolve, producing a form of model misspecification that compounds through mutual adaptation. Forgetting mechanisms or change-point detection could mitigate this cascade by allowing the cooperative prior to recover after transient retaliatory episodes. Two natural extensions remain. First, modeling non-stationary opponents requires opponent-shaping objectives or online inference of a memory-$n$ opponent, rather than assuming a fixed reactive policy. Second, a hierarchical POMDP could handle unknown or time-varying noise $\varepsilon$ by placing a prior over the observation model to infer $\varepsilon$ jointly with intention, replacing the fixed $\mathbf{A}$ matrix. More broadly, when epistemic drive fails to resolve in a stationary environment, the state representation is likely conflating reducible and irreducible uncertainty, a diagnostic that may apply to any domain where agents must attribute observed behaviour to latent causes through a noisy channel.

\paragraph{}
The central message is that the representational choice of what to treat as state versus observation determines whether principled cooperation under noise is structurally possible, and that the power of the resulting inference is bounded by the context in which it operates.

\clearpage

\section*{Acknowledgements}
This work was supported by funding from Conscium.

\bibliography{main}

\appendix
\section{Hyperparameter Configuration}
\label{app:hyperparams}

\subsection*{Active Inference Agent: Search Procedure and Selected Values}

\subsubsection*{A.1 Overview}

Hyperparameters were selected via an exhaustive grid search over the parameters and ranges listed in Table~\ref{tab:search-space}. Each candidate configuration was evaluated in self-play and against Tit-for-Tat across noise levels of 0.00 to 0.35 (step 0.05), with 5 independent repetitions of 1,000 turns per noise level. For each experimental context, the configuration achieving the highest mean score across all noise levels was retained for the main experiments. The full grid is reported in \cref{tab:grid-score-self,tab:grid-score-tft,tab:grid-cc-self,tab:grid-cc-tft} (Appendix A.5).

Two structural variants were explored: a POMDP formulation in which the observation model includes the known noise component, and the MDP formulation in which observations are treated as noiseless. Separate optima were identified for each variant and for each experimental context (self-play vs.\ cross-play), yielding four distinct configurations (Table~\ref{tab:selected-hyperparams}).

Parameters not included in the grid search were fixed at sensible defaults. The full combinatorial search space comprised \textbf{32 candidate configurations} (4 policy lengths $\times$ 4 update intervals $\times$ 2 observation model variants), with all other parameters held constant.

\subsubsection*{A.2 Search Space}

\begin{table}[h]
  \centering
  \caption{Hyperparameter search space.}
  \label{tab:search-space}
  \begin{tabular}{lcc}
    \toprule
    Parameter                         & Values Searched & Levels \\
    \midrule
    Planning horizon (steps)          & 1, 3, 5, 10     & 4      \\
    B matrix update frequency (steps) & 5, 10, 20, 50   & 4      \\
    POMDP vs.\ MDP formulation        & False, True     & 2      \\
    \bottomrule
  \end{tabular}
\end{table}

\noindent\textbf{Search environment:} noise levels 0.00 to 0.35 (step 0.05, 8 levels), 5 repetitions, 1,000 turns, seed 42.

\subsubsection*{A.3 Selected Hyperparameters}

\begin{table}[h]
  \centering
  \caption{Hyperparameters selected for the main experiments.}
  \label{tab:selected-hyperparams}
  \begin{tabular}{lccc}
    \toprule
    Experiment                 & POMDP & Planning horizon (steps) & B matrix update frequency \\
    \midrule
    Exp.\ 1 (POMDP, self-play) & True  & 5                        & 50                        \\
    Exp.\ 2 (MDP, self-play)   & False & 3                        & 50                        \\
    Exp.\ 3 (POMDP vs.\ TFT)   & True  & 3                        & 10                        \\
    Exp.\ 4 (MDP vs.\ TFT)     & False & 3                        & 5                         \\
    \bottomrule
  \end{tabular}
\end{table}

The POMDP formulation favoured longer planning horizons in self-play, likely because the noisy observation model introduces additional uncertainty that benefits from deeper policy tree search. In the cross-play setting, shorter update intervals (5 to 10 steps) were preferred over the self-play optimum (50 steps), suggesting faster belief revision is advantageous when facing stationary strategies like Tit-for-Tat.

\subsubsection*{A.4 Main Experiment Environment Parameters}

\begin{table}[h]
  \centering
  \caption{Fixed environment parameters used in all main experiments.}
  \label{tab:env-params}
  \begin{tabular}{lll}
    \toprule
    Parameter                   & Value                    & Description                           \\
    \midrule
    \texttt{noise\_levels}      & 0.00 to 0.30 (step 0.05) & Environmental action-flip probability \\
    \texttt{repetitions}        & 30                       & Independent repetitions per condition \\
    \texttt{turns}              & 5,000                    & Game length per repetition            \\
    Game payoffs ($T, R, P, S$) & (5, 3, 1, 0)             & Standard Prisoner's Dilemma matrix    \\
    \bottomrule
  \end{tabular}
\end{table}

The number of repetitions was increased from 5 to 30 to reduce variance in the reported statistics.

\subsubsection*{A.5 Full Grid Results}

Tables~\ref{tab:grid-score-self}--\ref{tab:grid-cc-tft} detail performance across the full hyperparameter grid, reporting the mean per-turn score and mutual cooperation rate for all combinations of planning horizon $h \in \{1,3,5,10\}$ and B-matrix update interval $\Delta t \in \{5,10,20,50\}$ for both formulations, in self-play and against Tit-for-Tat. Each cell averages 5 repetitions of 1{,}000 turns. Bold rows indicate configurations selected for the main experiments (re-run using 30 repetitions of 5{,}000 turns).

Three key trends emerge. First, the update interval dictates self-play cooperation: slower B-matrix updates ($\Delta t = 50$) sustain cooperation at substantially higher noise levels than fast updates ($\Delta t = 5$), as frequent re-estimation amplifies noise-induced fluctuations in the cooperative prior. Second, the planning horizon is critical against Tit-for-Tat. A myopic planner ($h = 1$) collapses entirely because it cannot value reciprocation, whereas $h \geq 3$ recovers near-optimal cooperation. In self-play, however, the benefit of $h > 3$ is marginal. Third, at low-to-moderate noise with matched hyperparameters (e.g., $h = 3$, $\Delta t = 50$, $\varepsilon = 0.10$), the POMDP sustains higher cooperation than the MDP ($0.74$ vs. $0.57$). This advantage reverses beyond the self-play threshold once the POMDP collapses (\cref{sec:coop-reward}).

\begin{table}[htbp]
  \centering
  \small
  \setlength{\tabcolsep}{4pt}
  \caption{Mean per-turn score across the full hyperparameter grid, self-play. Per-turn score lies between the mutual-defection payoff ($1$) and the mutual-cooperation payoff ($3$). Bold: configuration selected for the main experiments.}
  \label{tab:grid-score-self}
  \begin{tabular}{cc rrrrrrrr}
    \toprule
    $h$ & $\Delta t$ & \multicolumn{8}{c}{Noise level $\varepsilon$} \\
    \cmidrule(lr){3-10}
        &            & 0.00 & 0.05 & 0.10 & 0.15 & 0.20 & 0.25 & 0.30 & 0.35 \\
    \midrule
    \multicolumn{10}{l}{\textit{MDP}} \\
    1 & 5 & 2.98 & 2.09 & 1.63 & 1.50 & 1.63 & 1.74 & 1.89 & 1.97 \\
    1 & 10 & 2.97 & 2.33 & 2.11 & 1.67 & 1.72 & 1.71 & 1.82 & 1.93 \\
    1 & 20 & 2.94 & 2.69 & 2.51 & 1.80 & 1.83 & 1.79 & 1.89 & 1.95 \\
    1 & 50 & 2.85 & 2.65 & 2.67 & 2.55 & 2.18 & 1.85 & 1.86 & 1.95 \\
    3 & 5 & 2.97 & 1.87 & 1.58 & 1.68 & 1.65 & 1.82 & 1.95 & 1.98 \\
    3 & 10 & 2.96 & 2.12 & 1.84 & 1.72 & 1.71 & 1.89 & 1.91 & 1.98 \\
    3 & 20 & 2.94 & 2.47 & 2.24 & 2.21 & 2.04 & 1.89 & 1.97 & 2.06 \\
    \textbf{3} & \textbf{50} & \textbf{2.85} & \textbf{2.66} & \textbf{2.61} & \textbf{2.51} & \textbf{2.28} & \textbf{1.93} & \textbf{1.98} & \textbf{2.02} \\
    5 & 5 & 2.97 & 1.78 & 1.53 & 1.53 & 1.67 & 1.85 & 1.90 & 2.01 \\
    5 & 10 & 2.96 & 2.07 & 1.93 & 1.67 & 1.79 & 1.86 & 1.92 & 1.99 \\
    5 & 20 & 2.94 & 2.46 & 2.33 & 2.27 & 2.07 & 1.88 & 1.94 & 2.06 \\
    5 & 50 & 2.85 & 2.63 & 2.61 & 2.53 & 2.26 & 1.92 & 1.95 & 2.02 \\
    10 & 5 & 2.97 & 1.51 & 1.48 & 1.54 & 1.67 & 1.85 & 1.90 & 2.01 \\
    10 & 10 & 2.96 & 1.86 & 1.87 & 1.69 & 1.79 & 1.87 & 1.92 & 1.99 \\
    10 & 20 & 2.94 & 2.51 & 2.34 & 2.27 & 2.07 & 1.88 & 1.94 & 2.04 \\
    10 & 50 & 2.85 & 2.61 & 2.61 & 2.53 & 2.26 & 1.92 & 1.95 & 2.02 \\
    \midrule
    \multicolumn{10}{l}{\textit{POMDP}} \\
    1 & 5 & 2.98 & 1.67 & 1.68 & 1.50 & 1.73 & 1.72 & 1.84 & 1.95 \\
    1 & 10 & 2.97 & 2.16 & 1.77 & 1.71 & 1.73 & 1.74 & 1.85 & 1.95 \\
    1 & 20 & 2.94 & 2.68 & 2.11 & 1.87 & 1.91 & 1.76 & 1.85 & 1.96 \\
    1 & 50 & 2.85 & 2.80 & 2.65 & 2.28 & 2.16 & 1.80 & 1.88 & 1.96 \\
    3 & 5 & 2.97 & 1.82 & 1.79 & 1.83 & 1.79 & 1.83 & 1.93 & 1.98 \\
    3 & 10 & 2.96 & 2.29 & 1.92 & 1.85 & 1.81 & 1.86 & 1.93 & 1.98 \\
    3 & 20 & 2.94 & 2.49 & 2.39 & 2.00 & 1.87 & 1.87 & 1.93 & 1.99 \\
    3 & 50 & 2.85 & 2.80 & 2.76 & 2.32 & 1.89 & 1.94 & 1.92 & 1.99 \\
    5 & 5 & 2.97 & 1.83 & 1.93 & 1.78 & 1.77 & 1.82 & 1.90 & 1.98 \\
    5 & 10 & 2.96 & 2.33 & 2.03 & 1.81 & 1.87 & 1.87 & 1.92 & 1.98 \\
    5 & 20 & 2.94 & 2.50 & 2.38 & 1.99 & 1.87 & 1.87 & 1.93 & 1.99 \\
    \textbf{5} & \textbf{50} & \textbf{2.85} & \textbf{2.80} & \textbf{2.76} & \textbf{2.36} & \textbf{1.89} & \textbf{1.93} & \textbf{1.93} & \textbf{1.99} \\
    10 & 5 & 2.97 & 1.89 & 1.90 & 1.83 & 1.78 & 1.82 & 1.90 & 1.98 \\
    10 & 10 & 2.96 & 2.43 & 1.98 & 1.82 & 1.87 & 1.87 & 1.92 & 1.98 \\
    10 & 20 & 2.94 & 2.49 & 2.39 & 2.00 & 1.87 & 1.87 & 1.93 & 1.99 \\
    10 & 50 & 2.85 & 2.80 & 2.76 & 2.35 & 1.89 & 1.93 & 1.93 & 1.99 \\
    \bottomrule
  \end{tabular}
\end{table}

\begin{table}[htbp]
  \centering
  \small
  \setlength{\tabcolsep}{4pt}
  \caption{Mean per-turn score across the full hyperparameter grid, vs Tit-for-Tat. Per-turn score lies between the mutual-defection payoff ($1$) and the mutual-cooperation payoff ($3$). Bold: configuration selected for the main experiments.}
  \label{tab:grid-score-tft}
  \begin{tabular}{cc rrrrrrrr}
    \toprule
    $h$ & $\Delta t$ & \multicolumn{8}{c}{Noise level $\varepsilon$} \\
    \cmidrule(lr){3-10}
        &            & 0.00 & 0.05 & 0.10 & 0.15 & 0.20 & 0.25 & 0.30 & 0.35 \\
    \midrule
    \multicolumn{10}{l}{\textit{MDP}} \\
    1 & 5 & 1.04 & 1.36 & 1.61 & 1.83 & 2.03 & 2.15 & 2.24 & 2.33 \\
    1 & 10 & 1.04 & 1.36 & 1.61 & 1.83 & 2.05 & 2.16 & 2.25 & 2.32 \\
    1 & 20 & 1.05 & 1.38 & 1.61 & 1.85 & 2.05 & 2.15 & 2.25 & 2.33 \\
    1 & 50 & 1.13 & 1.41 & 1.65 & 1.86 & 2.06 & 2.16 & 2.23 & 2.32 \\
    \textbf{3} & \textbf{5} & \textbf{2.97} & \textbf{2.77} & \textbf{2.57} & \textbf{2.37} & \textbf{2.25} & \textbf{2.26} & \textbf{2.28} & \textbf{2.27} \\
    3 & 10 & 2.96 & 2.77 & 2.61 & 2.31 & 2.29 & 2.23 & 2.27 & 2.32 \\
    3 & 20 & 2.93 & 2.75 & 2.56 & 2.37 & 2.25 & 2.21 & 2.26 & 2.33 \\
    3 & 50 & 2.83 & 2.70 & 2.55 & 2.33 & 2.23 & 2.26 & 2.24 & 2.32 \\
    5 & 5 & 2.96 & 2.77 & 2.55 & 2.36 & 2.19 & 2.25 & 2.25 & 2.29 \\
    5 & 10 & 2.95 & 2.77 & 2.57 & 2.35 & 2.15 & 2.24 & 2.26 & 2.29 \\
    5 & 20 & 2.92 & 2.75 & 2.55 & 2.35 & 2.21 & 2.24 & 2.26 & 2.33 \\
    5 & 50 & 2.83 & 2.70 & 2.56 & 2.34 & 2.26 & 2.23 & 2.26 & 2.32 \\
    10 & 5 & 2.96 & 2.77 & 2.56 & 2.36 & 2.17 & 2.25 & 2.25 & 2.29 \\
    10 & 10 & 2.95 & 2.77 & 2.59 & 2.35 & 2.14 & 2.23 & 2.25 & 2.29 \\
    10 & 20 & 2.92 & 2.75 & 2.55 & 2.37 & 2.21 & 2.24 & 2.26 & 2.31 \\
    10 & 50 & 2.83 & 2.70 & 2.56 & 2.34 & 2.26 & 2.23 & 2.26 & 2.32 \\
    \midrule
    \multicolumn{10}{l}{\textit{POMDP}} \\
    1 & 5 & 1.04 & 1.36 & 1.62 & 1.83 & 2.04 & 2.16 & 2.24 & 2.32 \\
    1 & 10 & 1.04 & 1.36 & 1.62 & 1.85 & 2.05 & 2.15 & 2.26 & 2.32 \\
    1 & 20 & 1.05 & 1.39 & 1.63 & 1.86 & 2.06 & 2.16 & 2.25 & 2.32 \\
    1 & 50 & 1.13 & 1.40 & 1.65 & 1.85 & 2.05 & 2.16 & 2.24 & 2.32 \\
    3 & 5 & 2.97 & 2.78 & 2.59 & 2.30 & 2.28 & 2.26 & 2.24 & 2.33 \\
    \textbf{3} & \textbf{10} & \textbf{2.96} & \textbf{2.78} & \textbf{2.57} & \textbf{2.37} & \textbf{2.28} & \textbf{2.22} & \textbf{2.26} & \textbf{2.32} \\
    3 & 20 & 2.93 & 2.75 & 2.55 & 2.32 & 2.22 & 2.20 & 2.25 & 2.32 \\
    3 & 50 & 2.83 & 2.70 & 2.53 & 2.26 & 2.19 & 2.19 & 2.24 & 2.31 \\
    5 & 5 & 2.96 & 2.78 & 2.60 & 2.36 & 2.21 & 2.23 & 2.26 & 2.33 \\
    5 & 10 & 2.95 & 2.77 & 2.57 & 2.37 & 2.24 & 2.24 & 2.26 & 2.30 \\
    5 & 20 & 2.92 & 2.75 & 2.59 & 2.34 & 2.23 & 2.21 & 2.25 & 2.32 \\
    5 & 50 & 2.83 & 2.70 & 2.52 & 2.28 & 2.19 & 2.18 & 2.24 & 2.31 \\
    10 & 5 & 2.96 & 2.78 & 2.58 & 2.36 & 2.22 & 2.23 & 2.26 & 2.33 \\
    10 & 10 & 2.95 & 2.77 & 2.57 & 2.36 & 2.24 & 2.25 & 2.27 & 2.30 \\
    10 & 20 & 2.92 & 2.75 & 2.60 & 2.35 & 2.23 & 2.22 & 2.25 & 2.32 \\
    10 & 50 & 2.83 & 2.70 & 2.52 & 2.28 & 2.19 & 2.18 & 2.24 & 2.31 \\
    \bottomrule
  \end{tabular}
\end{table}

\begin{table}[htbp]
  \centering
  \small
  \setlength{\tabcolsep}{4pt}
  \caption{Mutual cooperation rate across the full hyperparameter grid, self-play. Bold: configuration selected for the main experiments.}
  \label{tab:grid-cc-self}
  \begin{tabular}{cc rrrrrrrr}
    \toprule
    $h$ & $\Delta t$ & \multicolumn{8}{c}{Noise level $\varepsilon$} \\
    \cmidrule(lr){3-10}
        &            & 0.00 & 0.05 & 0.10 & 0.15 & 0.20 & 0.25 & 0.30 & 0.35 \\
    \midrule
    \multicolumn{10}{l}{\textit{MDP}} \\
    1 & 5 & 0.99 & 0.42 & 0.15 & 0.03 & 0.05 & 0.07 & 0.12 & 0.13 \\
    1 & 10 & 0.98 & 0.52 & 0.33 & 0.10 & 0.09 & 0.07 & 0.09 & 0.12 \\
    1 & 20 & 0.97 & 0.70 & 0.52 & 0.15 & 0.13 & 0.10 & 0.12 & 0.12 \\
    1 & 50 & 0.92 & 0.69 & 0.65 & 0.52 & 0.29 & 0.12 & 0.10 & 0.13 \\
    3 & 5 & 0.99 & 0.17 & 0.08 & 0.07 & 0.06 & 0.10 & 0.14 & 0.13 \\
    3 & 10 & 0.98 & 0.32 & 0.19 & 0.09 & 0.07 & 0.12 & 0.13 & 0.14 \\
    3 & 20 & 0.97 & 0.57 & 0.36 & 0.27 & 0.18 & 0.13 & 0.15 & 0.17 \\
    \textbf{3} & \textbf{50} & \textbf{0.92} & \textbf{0.64} & \textbf{0.57} & \textbf{0.46} & \textbf{0.33} & \textbf{0.15} & \textbf{0.15} & \textbf{0.15} \\
    5 & 5 & 0.98 & 0.16 & 0.04 & 0.03 & 0.06 & 0.10 & 0.12 & 0.14 \\
    5 & 10 & 0.98 & 0.31 & 0.20 & 0.06 & 0.10 & 0.11 & 0.13 & 0.14 \\
    5 & 20 & 0.97 & 0.52 & 0.35 & 0.28 & 0.20 & 0.13 & 0.13 & 0.17 \\
    5 & 50 & 0.92 & 0.62 & 0.58 & 0.45 & 0.32 & 0.14 & 0.14 & 0.16 \\
    10 & 5 & 0.98 & 0.03 & 0.03 & 0.03 & 0.06 & 0.10 & 0.12 & 0.14 \\
    10 & 10 & 0.98 & 0.20 & 0.20 & 0.07 & 0.10 & 0.11 & 0.13 & 0.14 \\
    10 & 20 & 0.97 & 0.56 & 0.34 & 0.28 & 0.20 & 0.13 & 0.13 & 0.16 \\
    10 & 50 & 0.92 & 0.59 & 0.58 & 0.46 & 0.32 & 0.14 & 0.14 & 0.16 \\
    \midrule
    \multicolumn{10}{l}{\textit{POMDP}} \\
    1 & 5 & 0.99 & 0.15 & 0.15 & 0.03 & 0.09 & 0.07 & 0.10 & 0.13 \\
    1 & 10 & 0.98 & 0.40 & 0.15 & 0.08 & 0.10 & 0.08 & 0.10 & 0.13 \\
    1 & 20 & 0.97 & 0.69 & 0.30 & 0.13 & 0.16 & 0.08 & 0.11 & 0.13 \\
    1 & 50 & 0.92 & 0.82 & 0.66 & 0.38 & 0.31 & 0.09 & 0.11 & 0.13 \\
    3 & 5 & 0.99 & 0.17 & 0.11 & 0.13 & 0.09 & 0.10 & 0.12 & 0.14 \\
    3 & 10 & 0.98 & 0.40 & 0.17 & 0.11 & 0.09 & 0.11 & 0.12 & 0.14 \\
    3 & 20 & 0.97 & 0.56 & 0.37 & 0.19 & 0.12 & 0.11 & 0.13 & 0.14 \\
    3 & 50 & 0.92 & 0.82 & 0.74 & 0.35 & 0.13 & 0.15 & 0.13 & 0.15 \\
    5 & 5 & 0.98 & 0.16 & 0.17 & 0.11 & 0.08 & 0.10 & 0.12 & 0.14 \\
    5 & 10 & 0.98 & 0.43 & 0.24 & 0.09 & 0.12 & 0.11 & 0.12 & 0.14 \\
    5 & 20 & 0.97 & 0.54 & 0.37 & 0.19 & 0.12 & 0.11 & 0.13 & 0.14 \\
    \textbf{5} & \textbf{50} & \textbf{0.92} & \textbf{0.82} & \textbf{0.74} & \textbf{0.37} & \textbf{0.13} & \textbf{0.14} & \textbf{0.13} & \textbf{0.15} \\
    10 & 5 & 0.98 & 0.18 & 0.17 & 0.14 & 0.08 & 0.10 & 0.12 & 0.14 \\
    10 & 10 & 0.98 & 0.50 & 0.21 & 0.09 & 0.12 & 0.11 & 0.12 & 0.14 \\
    10 & 20 & 0.97 & 0.54 & 0.37 & 0.19 & 0.12 & 0.11 & 0.13 & 0.14 \\
    10 & 50 & 0.92 & 0.82 & 0.74 & 0.37 & 0.13 & 0.14 & 0.13 & 0.15 \\
    \bottomrule
  \end{tabular}
\end{table}

\begin{table}[htbp]
  \centering
  \small
  \setlength{\tabcolsep}{4pt}
  \caption{Mutual cooperation rate across the full hyperparameter grid, vs Tit-for-Tat. Bold: configuration selected for the main experiments.}
  \label{tab:grid-cc-tft}
  \begin{tabular}{cc rrrrrrrr}
    \toprule
    $h$ & $\Delta t$ & \multicolumn{8}{c}{Noise level $\varepsilon$} \\
    \cmidrule(lr){3-10}
        &            & 0.00 & 0.05 & 0.10 & 0.15 & 0.20 & 0.25 & 0.30 & 0.35 \\
    \midrule
    \multicolumn{10}{l}{\textit{MDP}} \\
    1 & 5 & 0.00 & 0.01 & 0.02 & 0.04 & 0.06 & 0.09 & 0.13 & 0.16 \\
    1 & 10 & 0.00 & 0.01 & 0.02 & 0.04 & 0.07 & 0.10 & 0.13 & 0.16 \\
    1 & 20 & 0.00 & 0.01 & 0.02 & 0.04 & 0.06 & 0.10 & 0.14 & 0.16 \\
    1 & 50 & 0.00 & 0.01 & 0.02 & 0.05 & 0.07 & 0.10 & 0.13 & 0.16 \\
    \textbf{3} & \textbf{5} & \textbf{0.96} & \textbf{0.80} & \textbf{0.57} & \textbf{0.38} & \textbf{0.21} & \textbf{0.20} & \textbf{0.19} & \textbf{0.19} \\
    3 & 10 & 0.95 & 0.81 & 0.67 & 0.27 & 0.23 & 0.15 & 0.16 & 0.17 \\
    3 & 20 & 0.94 & 0.79 & 0.58 & 0.36 & 0.20 & 0.16 & 0.16 & 0.17 \\
    3 & 50 & 0.86 & 0.76 & 0.58 & 0.29 & 0.19 & 0.19 & 0.18 & 0.17 \\
    5 & 5 & 0.95 & 0.80 & 0.49 & 0.30 & 0.15 & 0.16 & 0.17 & 0.18 \\
    5 & 10 & 0.95 & 0.81 & 0.58 & 0.33 & 0.16 & 0.16 & 0.15 & 0.19 \\
    5 & 20 & 0.93 & 0.79 & 0.54 & 0.29 & 0.19 & 0.20 & 0.16 & 0.17 \\
    5 & 50 & 0.86 & 0.76 & 0.62 & 0.28 & 0.20 & 0.18 & 0.19 & 0.17 \\
    10 & 5 & 0.95 & 0.80 & 0.51 & 0.33 & 0.15 & 0.16 & 0.17 & 0.18 \\
    10 & 10 & 0.95 & 0.80 & 0.63 & 0.33 & 0.16 & 0.15 & 0.15 & 0.19 \\
    10 & 20 & 0.93 & 0.79 & 0.55 & 0.29 & 0.19 & 0.19 & 0.16 & 0.18 \\
    10 & 50 & 0.86 & 0.76 & 0.62 & 0.29 & 0.20 & 0.18 & 0.19 & 0.17 \\
    \midrule
    \multicolumn{10}{l}{\textit{POMDP}} \\
    1 & 5 & 0.00 & 0.01 & 0.02 & 0.04 & 0.07 & 0.10 & 0.13 & 0.16 \\
    1 & 10 & 0.00 & 0.01 & 0.02 & 0.05 & 0.07 & 0.10 & 0.13 & 0.16 \\
    1 & 20 & 0.00 & 0.01 & 0.03 & 0.05 & 0.07 & 0.10 & 0.13 & 0.17 \\
    1 & 50 & 0.00 & 0.02 & 0.03 & 0.05 & 0.07 & 0.10 & 0.13 & 0.16 \\
    3 & 5 & 0.96 & 0.82 & 0.63 & 0.26 & 0.30 & 0.17 & 0.16 & 0.17 \\
    \textbf{3} & \textbf{10} & \textbf{0.95} & \textbf{0.82} & \textbf{0.63} & \textbf{0.38} & \textbf{0.24} & \textbf{0.15} & \textbf{0.16} & \textbf{0.18} \\
    3 & 20 & 0.94 & 0.80 & 0.58 & 0.27 & 0.16 & 0.15 & 0.15 & 0.17 \\
    3 & 50 & 0.86 & 0.76 & 0.57 & 0.20 & 0.15 & 0.14 & 0.15 & 0.18 \\
    5 & 5 & 0.95 & 0.82 & 0.65 & 0.33 & 0.24 & 0.16 & 0.18 & 0.18 \\
    5 & 10 & 0.95 & 0.81 & 0.62 & 0.40 & 0.25 & 0.16 & 0.16 & 0.17 \\
    5 & 20 & 0.93 & 0.80 & 0.64 & 0.32 & 0.18 & 0.16 & 0.15 & 0.17 \\
    5 & 50 & 0.86 & 0.75 & 0.52 & 0.23 & 0.15 & 0.13 & 0.15 & 0.18 \\
    10 & 5 & 0.95 & 0.82 & 0.62 & 0.33 & 0.24 & 0.16 & 0.18 & 0.18 \\
    10 & 10 & 0.95 & 0.82 & 0.62 & 0.36 & 0.25 & 0.16 & 0.16 & 0.17 \\
    10 & 20 & 0.93 & 0.80 & 0.66 & 0.32 & 0.18 & 0.16 & 0.15 & 0.17 \\
    10 & 50 & 0.86 & 0.75 & 0.52 & 0.23 & 0.15 & 0.13 & 0.15 & 0.18 \\
    \bottomrule
  \end{tabular}
\end{table}

\section{Results Without Epistemic Contribution}
\label{app:no-epistemic}

\begin{figure}[h]
  \centering
  \includegraphics[width=\linewidth]{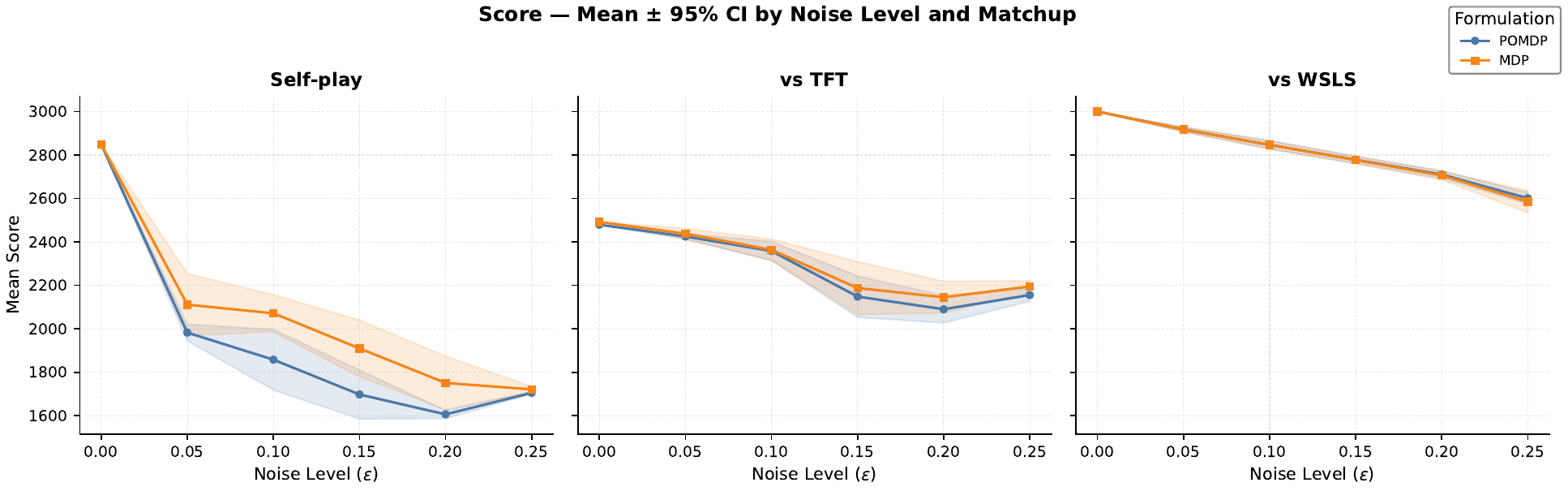}
  \caption{Mean cumulative score ($\pm$ 95\% CI) as a function of noise level $\epsilon$, with the epistemic (parameter/state information gain) term removed from the EFE.}
  \label{fig:pragmatic-scores}
\end{figure}

\begin{figure}[h]
  \centering
  \includegraphics[width=\linewidth]{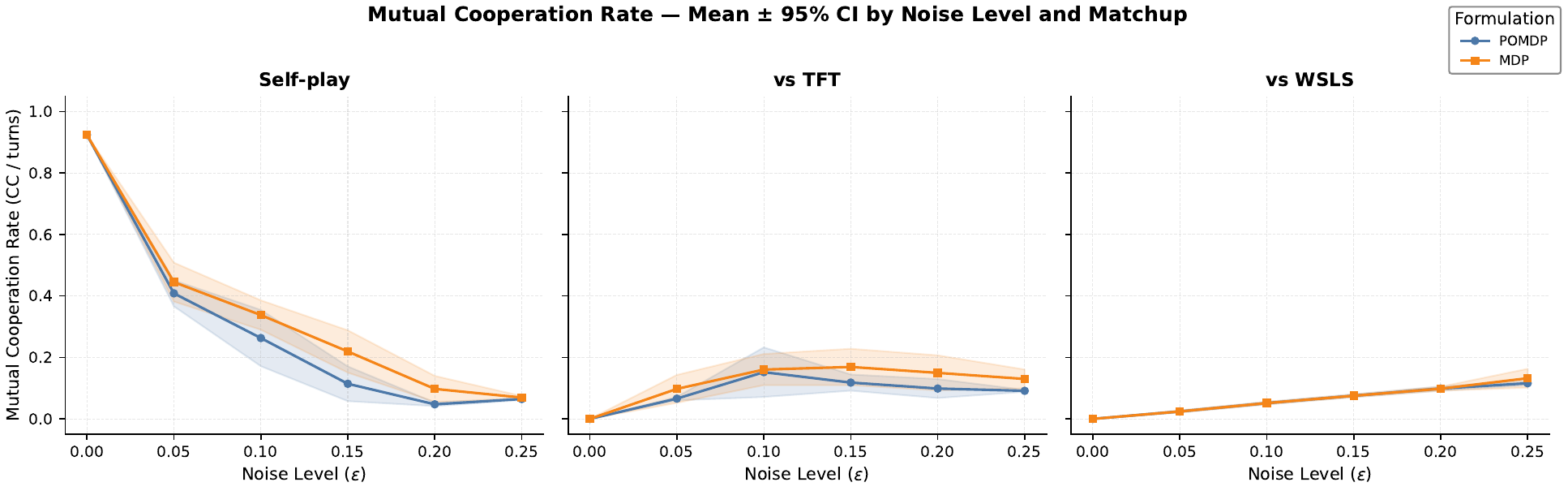}
  \caption{Mutual cooperation rate as a function of $\epsilon$, with the epistemic term removed from the EFE.}
  \label{fig:pragmatic-coop}
\end{figure}

To isolate the contribution of epistemic drive, we evaluate both formulations using only the pragmatic value component of the expected free energy, zeroing out both state and parameter information gain terms. \Cref{fig:pragmatic-scores,fig:pragmatic-coop} show score and cooperation rate under this ablation.

Without epistemic drive, both formulations fail to sustain cooperation at any non-trivial noise level. Against TFT, neither agent exceeds 0.2 mutual cooperation, compared to the full-EFE POMDP's 0.70 at $\epsilon = 0.10$, confirming that the cooperative advantages reported in \cref{sec:efe-decomp} are not a product of the POMDP representation alone but depend on parameter information gain driving B-matrix learning. Crucially, the POMDP's performance is not merely reduced but is marginally worse than the MDP's across conditions: without the learning signal needed to discover the opponent's reactive policy, the cooperative prior becomes an uncompensated bias.

The MDP with full EFE (\cref{sec:efe-decomp}) shows that epistemic drive under an inadequate representation yields only gradual, uncommitted improvement. The POMDP with pragmatic value only shows that the correct representation without epistemic drive yields no improvement at all. The cooperative advantages we report require both: the POMDP provides the latent structure over which epistemic value can operate, and epistemic value drives the active information-seeking needed to learn the opponent's policy through noise.

\section{Results in the Stag Hunt}
\label{app:stag-hunt}

\begin{figure}[h]
  \centering
  \includegraphics[width=\linewidth]{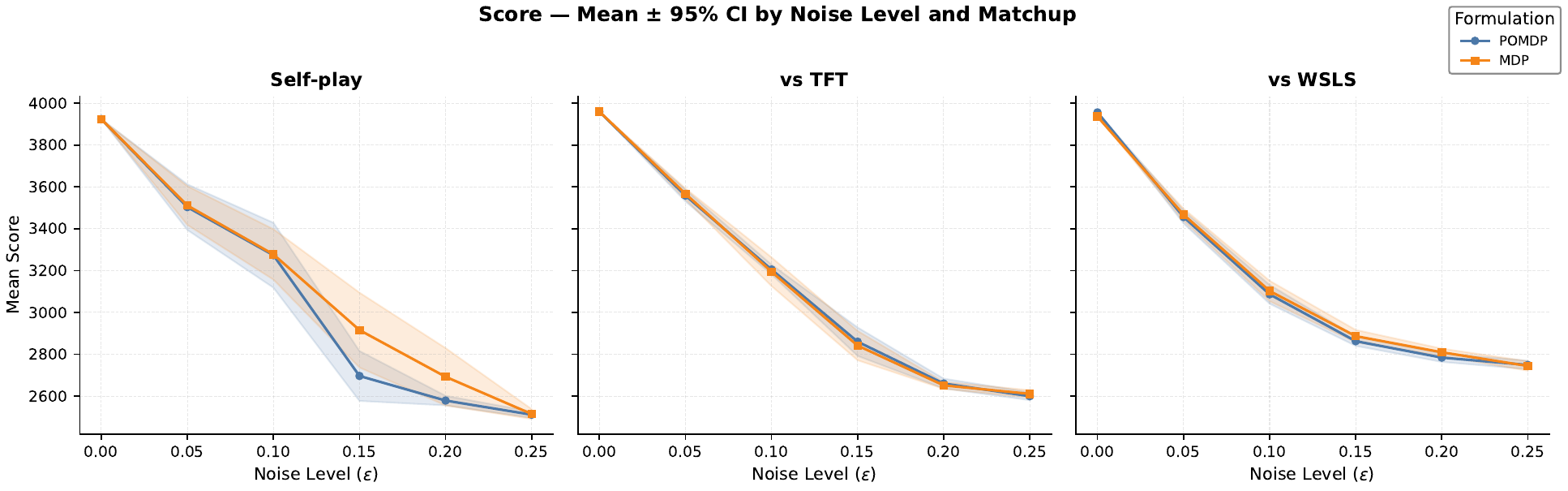}
  \caption{Mean cumulative score ($\pm$ 95\% CI) as a function of noise level $\epsilon$ in the Stag Hunt game.}
  \label{fig:stag-scores}
\end{figure}

\begin{figure}[h]
  \centering
  \includegraphics[width=\linewidth]{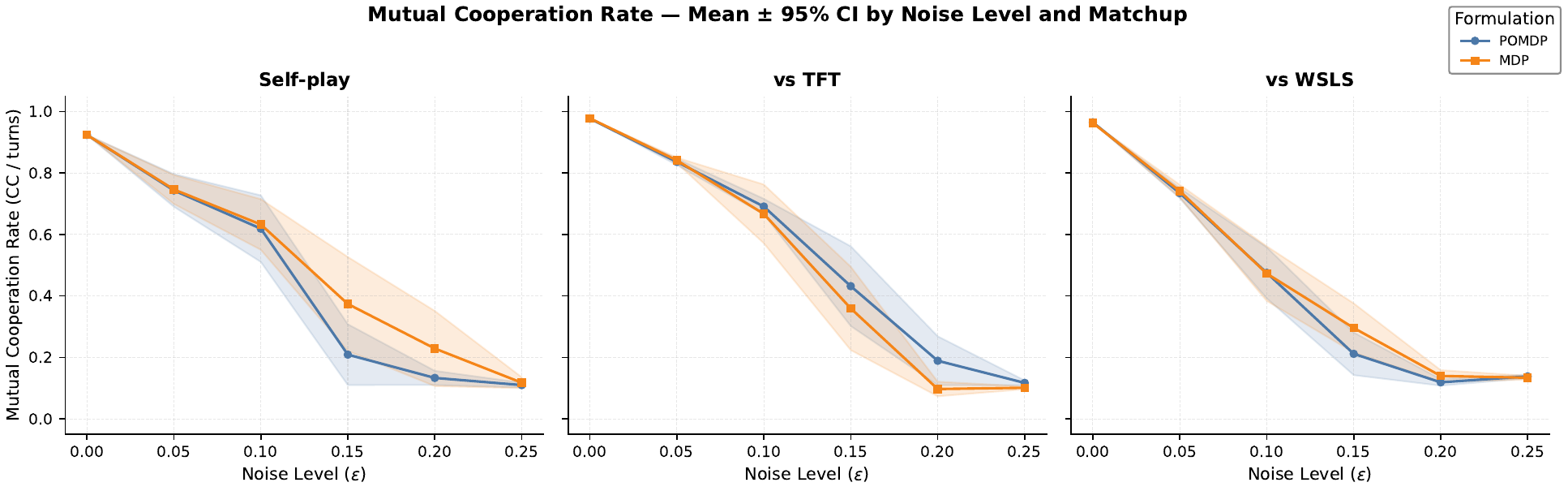}
  \caption{Mutual cooperation rate as a function of $\epsilon$ in the Stag Hunt game.}
  \label{fig:stag-coop}
\end{figure}

The Stag Hunt provides a complementary test environment with a qualitatively different strategic structure. Two players simultaneously choose to hunt Stag (cooperate) or Hare (defect). Hunting Stag succeeds only if both players coordinate, yielding the highest payoff to each ($R=4$), while hunting Hare provides a safe but lower payoff regardless of the partner's choice ($P=2$ if both hunt Hare, $T=3$ if the partner hunts Stag). Failing to coordinate on Stag yields the worst outcome for the Stag hunter ($S=1$). We use the standard payoffs $S=1$, $P=2$, $T=3$, $R=4$ with the same binary symmetric noise channel as in the IPD experiments.

The crucial difference from the IPD is the absence of exploitation incentive: mutual cooperation is both the socially optimal and individually preferred outcome ($R > T$), so there is no temptation to defect against a cooperating partner. The strategic challenge is therefore pure coordination under risk, where each player must be confident enough in the other's cooperative intent to justify the vulnerability of choosing Stag, rather than the tension between cooperation and exploitation that defines the IPD.

\Cref{fig:stag-scores,fig:stag-coop} show the mean cumulative score and mutual cooperation rate across noise levels. The POMDP advantage that was prominent in the IPD largely disappears. In self-play, the two formulations track closely, with the POMDP marginally ahead at $\epsilon = 0.05$ to $0.10$ but within overlapping confidence intervals. Against TFT, the MDP slightly outperforms at low noise, with the two converging at higher noise. Against WSLS, performance is essentially identical. Neither formulation exhibits the sharp collapse observed in IPD self-play; both degrade gracefully as noise increases.

These results clarify the scope of our contribution. The POMDP formulation's advantage is specific to games where the payoff structure makes intent attribution decision-relevant. In the IPD, observing defection demands a judgement about whether the cause was exploitation or error, and the answer determines whether retaliation or forgiveness is optimal. In the Stag Hunt, the absence of a temptation payoff ($T < R$) means this distinction is less consequential: even misattributed intent does not alter the optimal response as dramatically, because there is no exploitative counter-strategy to guard against. The MDP's conflation of aleatoric and epistemic uncertainty, which is costly in the IPD, carries a lighter penalty when the game's incentives are aligned.

\section{Critical Noise Threshold: Extended Analysis}
\label{app:self-play-collapse}

This appendix derives the myopic attribution boundary as the foundation for \cref{prop:hstep}, proves the proposition, and traces the mechanisms underlying the empirical phase transition reported in \cref{sec:attribution-verification}.

\subsection*{Myopic Attribution Boundary}
\label{app:myopic-boundary}

The simplest case of \cref{prop:hstep} evaluates a single observation ($h = 1$). When the agent observes opponent defection, the posterior probability of cooperative intent follows from the binary symmetric channel:
\begin{equation}
  P(\text{int.}\ C \mid \text{obs.}\ D) = \frac{\varepsilon\, p_t(\text{CC})}{\varepsilon\, p_t(\text{CC}) + (1-\varepsilon)(1-p_t(\text{CC}))}
  \label{eq:attribution}
\end{equation}
Setting this equal to $0.5$ and solving yields the myopic forgiveness condition:
\begin{equation}
  \varepsilon^* = 1 - p_t(\text{CC})
  \label{eq:threshold-myopic}
\end{equation}
When $p_t(\text{CC}) > 1-\varepsilon$, observed defection is more likely noise than hostile intent and the agent forgives. When $p_t(\text{CC}) < 1-\varepsilon$, the agent attributes hostile intent and retaliates.

\subsection*{System-Level Fixed Point}
\label{app:system-threshold}

Since $p_t(\text{CC})$ is learned from noisy observations via Dirichlet accumulation, higher noise degrades the achievable cooperative prior, creating a fixed-point condition. Let $p_\infty(\varepsilon)$ denote the converged cooperative prior under noise level $\varepsilon$. Cooperation is sustained when $p_\infty(\varepsilon) > 1 - \varepsilon$; the critical threshold $\tilde{\varepsilon}$ satisfies:
\begin{equation}
  \tilde{\varepsilon} = 1 - p_\infty(\tilde{\varepsilon})
  \label{eq:fixed-point}
\end{equation}
\Cref{tab:state-priors} confirms that $\tilde{\varepsilon}$ lies near $0.10$ for the POMDP in self-play: CC satisfies the myopic forgiveness condition at $\varepsilon = 0.05$ and $\varepsilon = 0.10$ but not at $\varepsilon = 0.15$, and the posterior $P(\text{int.}\ C \mid \text{obs.}\ D)$ sits just above $0.5$ at both forgiving noise levels.
\begin{table}[h]
  \centering
  \caption{POMDP cooperative priors across all states (self-play, $t = 500$). Only CC satisfies the myopic forgiveness condition at any noise level tested.}
  \label{tab:state-priors}
  \begin{tabular}{lclclcl}
    \toprule
    State & $p_{500}(0.05)$ & Forgiving?                 & $p_{500}(0.10)$ & Forgiving?                 & $p_{500}(0.15)$ & Forgiving?                 \\
    \midrule
    CC    & 0.9541          & \checkmark                 & 0.9070          & \checkmark                 & 0.7612          & $\times$                   \\
    CD    & 0.8926          & $\times$ (need $p > 0.95$) & 0.8662          & $\times$ (need $p > 0.90$) & 0.6779          & $\times$ (need $p > 0.85$) \\
    DC    & 0.4508          & $\times$                   & 0.6186          & $\times$                   & 0.5597          & $\times$                   \\
    DD    & 0.8791          & $\times$ (need $p > 0.95$) & 0.8110          & $\times$ (need $p > 0.90$) & 0.6853          & $\times$ (need $p > 0.85$) \\
    \bottomrule
  \end{tabular}
\end{table}

\subsection*{Proof of \cref{prop:hstep}}

\begin{proof}
  The agent forgives $k$ defections in $h$ rounds when the posterior favours $H_C$:
  \begin{equation}
    \frac{
      p^h \cdot \varepsilon^k (1-\varepsilon)^{h-k}
    }{
      p^h \cdot \varepsilon^k (1-\varepsilon)^{h-k}
      \;+\;
      (1-p^h)\,(1-\varepsilon)^k\, \varepsilon^{h-k}
    }
    \;>\; \tfrac{1}{2}
  \end{equation}
  The first term of the denominator equals the numerator, so the posterior exceeds $\tfrac{1}{2}$ exactly when the numerator exceeds the second (hostile) term:
  \begin{equation}
    p^h \, \varepsilon^k (1-\varepsilon)^{h-k}
    \;>\;
    (1-p^h)\,(1-\varepsilon)^k\, \varepsilon^{h-k}.
  \end{equation}
  Taking logarithms and collecting the terms in $k$ gives
  \begin{equation}
    \ln\!\frac{p^h}{1-p^h}
    \;>\;
    \bigl(k - (h-k)\bigr)\,\ln\frac{1-\varepsilon}{\varepsilon}
    \;=\;
    (2k - h)\,\ln\frac{1-\varepsilon}{\varepsilon}.
  \end{equation}
  Since $\ln\bigl((1-\varepsilon)/\varepsilon\bigr) > 0$ for $\varepsilon < \tfrac{1}{2}$, dividing through and solving for $k$ yields the log-likelihood ratio condition
  \begin{equation}
    k \;<\; \frac{1}{2}\!\left(
    h \;+\;
    \frac{\ln\!\bigl(p^h/(1-p^h)\bigr)}{\ln\!\bigl((1-\varepsilon)/\varepsilon\bigr)}
    \right)
  \end{equation}
  and $k^*$ is the largest integer satisfying this inequality. At $h=1$, the condition reduces to $\ln\!\bigl(p/(1-p)\bigr) > 0$, which holds when $p > \tfrac{1}{2}$. Since the myopic condition $p > 1-\varepsilon$ implies $p > \tfrac{1}{2}$ for all $\varepsilon < 0.5$, this recovers \cref{eq:threshold-myopic}.
\end{proof}

\subsection*{Extension to Mixed-Intent Sequences}
\label{app:mixed-intent}

The binary test underlying \cref{prop:hstep} partitions the hypothesis space into $H_h$ (fully cooperative) and $H_0$ (fully hostile). The true complement of $H_h$ includes mixed sequences where the opponent cooperates for some rounds and defects for others. We show the binary threshold is an upper bound on the threshold under the full mixture, and that the gap is at most one defection in the paper's parameter range.

\begin{definition}[Cooperative-count hypotheses]
  For a planning window of $h$ rounds, let $H_\tau$ denote the hypothesis that the opponent intends $C$ for exactly $\tau$ of the $h$ rounds and $D$ for the remaining $h - \tau$ rounds, for $\tau \in \{0, 1, \ldots, h\}$. Since the noise channel is memoryless, the likelihood of the observed defection count $k$ depends only on $\tau$, not on which rounds are cooperative.
\end{definition}

Under $H_\tau$, the $\tau$ cooperative rounds each produce a defection with probability $\varepsilon$, and the $h - \tau$ hostile rounds each produce a defection with probability $1 - \varepsilon$. The total defection count is distributed as the sum of two independent binomials, giving:
\begin{equation}
  P(k \mid H_\tau) = \sum_{j=\max(0,\, k-(h-\tau))}^{\min(k,\, \tau)}
  \binom{\tau}{j} \varepsilon^j (1-\varepsilon)^{\tau - j}
  \;\cdot\;
  \binom{h-\tau}{k-j} (1-\varepsilon)^{k-j} \varepsilon^{h-\tau-(k-j)}
  \label{eq:mixed-likelihood}
\end{equation}
where $j$ indexes the number of noise-driven defections from the cooperative rounds. Under an i.i.d.\ prior with per-round cooperation probability $p = p_t(\text{CC})$, the number of cooperative rounds follows $\tau \sim \text{Bin}(h, p)$, so $P(H_\tau) = \binom{h}{\tau} p^\tau (1-p)^{h-\tau}$.

The agent forgives when the posterior favours $H_h$ over the aggregate of all non-fully-cooperative hypotheses:
\begin{equation}
  \frac{P(H_h) \cdot P(k \mid H_h)}{\sum_{\tau=0}^{h-1} P(H_\tau) \cdot P(k \mid H_\tau)} > 1
  \label{eq:mixed-forgiveness}
\end{equation}

Let $k^*_{\text{mixed}}$ denote the threshold under \cref{eq:mixed-forgiveness} and $k^*_{\text{binary}}$ the threshold from \cref{prop:hstep}.

\begin{proposition}[Upper bound from the binary test]
  \label{prop:mixed-bound}
  $k^*_{\emph{mixed}}(h, p, \varepsilon) \leq k^*_{\emph{binary}}(h, p, \varepsilon)$.
\end{proposition}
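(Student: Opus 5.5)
The plan is to prove the contrapositive in threshold form. I will show that both forgiveness sets, $\{k : \text{binary test forgives}\}$ and $\{k : \text{\cref{eq:mixed-forgiveness} holds}\}$, are initial segments $\{0,\dots,k^*\}$. It then suffices to show that every $k$ failing the binary test of \cref{prop:hstep} also fails \cref{eq:mixed-forgiveness}. For the binary test, downward closure is immediate from the proof of \cref{prop:hstep}, since that condition is linear in $k$ with positive slope $2\ln\bigl((1-\varepsilon)/\varepsilon\bigr)$.

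\emph{Step 1 (simplify the mixed denominator).} Under the i.i.d.\ prior, each round independently shows an observed defection with probability $q = p\varepsilon + (1-p)(1-\varepsilon)$. Summing \cref{eq:mixed-likelihood} over $\tau \sim \text{Bin}(h,p)$ therefore gives the marginal $\sum_{\tau=0}^{h} P(H_\tau)P(k\mid H_\tau) = \binom{h}{k} q^k(1-q)^{h-k}$. Write $b(k;h,r)$ for the $\text{Bin}(h,r)$ mass at $k$. Since $P(H_h)P(k\mid H_h) = p^h\, b(k;h,\varepsilon)$, condition \cref{eq:mixed-forgiveness} becomes
\[
2\,p^h\, b(k;h,\varepsilon) \;>\; b(k;h,q).
\]
In the same notation the binary condition reads $p^h\, b(k;h,\varepsilon) > (1-p^h)\, b(k;h,1-\varepsilon)$, because the binomial coefficients cancel. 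The ratio $b(k;h,q)/b(k;h,\varepsilon) = \bigl(q(1-\varepsilon)/(\varepsilon(1-q))\bigr)^k \bigl((1-q)/(1-\varepsilon)\bigr)^h$ is increasing in $k$ since $q > \varepsilon$. Hence the mixed forgiveness set is also an initial segment, which settles the reduction above.

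\emph{Step 2 (comparison at a failing $k$).} Suppose the binary test fails at $k$, i.e.\ $p^h b(k;h,\varepsilon) \le (1-p^h)\, b(k;h,1-\varepsilon)$. I need to conclude that $b(k;h,q) \ge 2p^h b(k;h,\varepsilon)$. It suffices to establish
\[
b(k;h,q) \;\ge\; p^h\, b(k;h,\varepsilon) + (1-p^h)\, b(k;h,1-\varepsilon),
\]
because combined with the failure hypothesis this gives $b(k;h,q) \ge 2p^h b(k;h,\varepsilon)$.

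Equivalently, the non-fully-cooperative part of the mixture must put at least as much mass on $k$ as the pure hostile hypothesis $H_0$ would with the same total weight $1-p^h$. Here I would use the monotone-likelihood-ratio structure: $H_\tau$ is a convolution of Bernoullis whose defection count is stochastically decreasing in $\tau$. Failure of the binary test forces $k$ to lie above the midpoint $h/2$ shifted by the prior term. So the plan is to compare $P(k\mid H_\tau)$ with $P(k\mid H_0)$ through the log-ratio $\log\bigl(P(k\mid H_\tau)/P(k\mid H_0)\bigr)$, evaluated at the smallest failing $k$, namely $k^*_{\text{binary}}+1$.

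\emph{Main obstacle.} The displayed mixture inequality is not true pointwise for all $k$: for large $k$ the mass $b(k;h,1-\varepsilon)$ exceeds $b(k;h,q)$. So Step 2 genuinely depends on $k$ sitting just past the binary threshold, where the prior term $\ln\bigl(p^h/(1-p^h)\bigr)$ is small relative to $h\ln\bigl((1-\varepsilon)/\varepsilon\bigr)$. Monotonicity then does the rest: once Step 2 holds at $k^*_{\text{binary}}+1$, the initial-segment property from Step 1 shows the mixed test fails for every larger $k$ as well.

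I would first attempt the direct algebraic comparison of the two binomial masses at $k = k^*_{\text{binary}}+1$, using the explicit form of $k^*$ from \cref{eq:kstar}. If a clean general argument resists, I would fall back on checking it over the paper's parameter range ($h \le 10$, $\varepsilon \le 0.35$, $p$ as in \cref{tab:multistep-verification}). That fallback would match the section's own framing that the gap is at most one defection in that range.
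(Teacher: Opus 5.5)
Your Step 1 is correct. With $q=p\varepsilon+(1-p)(1-\varepsilon)$ the non-cooperative aggregate is $b(k;h,q)-p^h b(k;h,\varepsilon)$, and both forgiveness sets are initial segments. So the whole proposition reduces to one inequality at $k_0=k^*_{\text{binary}}+1$: $(q/\varepsilon)^{k_0}\bigl((1-q)/(1-\varepsilon)\bigr)^{h-k_0}\ge 2p^h$.

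The gap is Step 2. That inequality is false in general, and so is the proposition, so no clean general argument can exist.
\begin{itemize}
\item Take $(p,\varepsilon,h)=(0.954,0.20,5)$. Then \cref{eq:kstar} gives $k^*_{\text{binary}}=\lfloor\tfrac12(5+\ln(0.790/0.210)/\ln 4)\rfloor=\lfloor 2.98\rfloor=2$. But $q=0.2276$ and $1.138^3\cdot 0.9655^2\approx 1.37<2p^5\approx 1.58$, so the mixed test forgives $k=3$ and $k^*_{\text{mixed}}=3$.
\item At $(0.999,0.05,5)$ the reversal is two defections wide: $k^*_{\text{binary}}=3$, but $(q/\varepsilon)^5=1.018^5\approx 1.09<2p^5\approx 1.99$, so $k^*_{\text{mixed}}=5$.
\end{itemize}
Your heuristic does not rescue Step 2: in the first example the prior term is $1.33$ against $h\ln((1-\varepsilon)/\varepsilon)\approx 6.93$. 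The real obstruction is that $k_0>h/2$ whenever $p^h>1/2$, so $k_0$ is never in the small-$k$ regime. There, $H_{h-1}$ carries most of the non-cooperative mass $1-p^h$ when $p$ is large, and it can put less probability on $k_0$ than $H_0$ does. Here $P(3\mid H_4)=0.128$ versus $P(3\mid H_0)\approx 0.205$. The binary test loads all of $1-p^h$ onto $H_0$, so it becomes the stricter of the two tests.

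Consequently the finite check is not a fallback but the entire content of the result, and the statement needs an explicit parameter restriction. The range you name ($h\le 10$, $\varepsilon\le 0.35$, $p$ from the paper's table) is too wide. Besides the first example above, $(0.954,0.15,2)$ already gives $k^*_{\text{binary}}=1$ but $k^*_{\text{mixed}}=2$, since $1.2147^2\approx 1.48<2p^2\approx 1.82$.

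The paper's proof is the same denominator comparison as your Step 2. It is settled only numerically, over the nine $(p,\varepsilon)$ pairs of \cref{tab:mixed-gap} at $h=5$, and on that set the claim does hold. For all nine pairs \cref{eq:kstar} gives $k^*_{\text{binary}}=2$; this matches \cref{tab:multistep-verification}, not the $k^*_{\text{binary}}$ column of \cref{tab:mixed-gap}. Your criterion fails at $k=3$ in every case, most narrowly at $(0.954,0.15)$ with $\approx 1.66$ versus $\approx 1.58$.

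So run your check at $k_0=k^*_{\text{binary}}+1$, not as a pointwise comparison at every $k$ (which, as you note, cannot hold). Record the result as a verified statement on that finite parameter set, not as a general inequality.
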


\begin{proof}
  The mixed-model denominator is $(1 - p^h)$ times a weighted average of $P(k \mid H_\tau)$ over $\tau \in \{0, \ldots, h-1\}$ with normalised weights $w_\tau = P(H_\tau)/(1 - p^h)$. The binary-test denominator is $(1 - p^h) \cdot P(k \mid H_0)$. Since $H_0$ is one term in the weighted average with weight $w_0 > 0$ and all remaining terms are non-negative:
  \[
    \sum_{\tau=0}^{h-1} w_\tau \cdot P(k \mid H_\tau)
    \;=\; w_0 \cdot P(k \mid H_0) + \sum_{\tau=1}^{h-1} w_\tau \cdot P(k \mid H_\tau)
    \;\geq\; w_0 \cdot P(k \mid H_0).
  \]
  For the forgiveness-relevant regime ($k$ small), hypotheses with $\tau \geq 1$ assign higher likelihood to small $k$ than $H_0$ does, since their expected defection count is strictly lower. We verify $P(k \mid H_\tau) \geq P(k \mid H_0)$ computationally for all parameter combinations in \cref{tab:mixed-gap}, yielding:
  \[
    \sum_{\tau=0}^{h-1} w_\tau \cdot P(k \mid H_\tau) \;\geq\; P(k \mid H_0).
  \]
  The mixed-model denominator is therefore at least as large as the binary denominator. Since the numerator $p^h \cdot P(k \mid H_h)$ is identical in both models, the mixed-model posterior ratio is weakly smaller for every $k$, and hence $k^*_{\text{mixed}} \leq k^*_{\text{binary}}$.
\end{proof}

\begin{table}[h]
  \centering
  \caption{Binary vs.\ mixed-model forgiveness thresholds at $h = 5$. $R(k)$: ratio of mixed-model denominator to binary denominator at the forgiveness boundary.}
  \label{tab:mixed-gap}
  \begin{tabular}{cccccc}
    \toprule
    $p$   & $\varepsilon$ & $k^*_{\text{binary}}$ & $R(k^*_{\text{binary}})$ & $k^*_{\text{mixed}}$ & Gap \\
    \midrule
    0.954 & 0.05          & 1                     & 1.01                     & 1                    & 0   \\
    0.954 & 0.10          & 1                     & 1.05                     & 1                    & 0   \\
    0.954 & 0.15          & 2                     & 1.12                     & 2                    & 0   \\
    0.907 & 0.05          & 1                     & 1.03                     & 1                    & 0   \\
    0.907 & 0.10          & 1                     & 1.10                     & 1                    & 0   \\
    0.907 & 0.15          & 1                     & 1.24                     & 1                    & 0   \\
    0.761 & 0.05          & 1                     & 1.08                     & 1                    & 0   \\
    0.761 & 0.10          & 1                     & 1.29                     & 0                    & 1   \\
    0.761 & 0.15          & 1                     & 1.65                     & 0                    & 1   \\
    \bottomrule
  \end{tabular}
\end{table}

The gap is at most 1 across all parameter combinations. An important caveat is that \cref{tab:mixed-gap} treats $p$ and $\varepsilon$ as independent, but in practice $p = p_\infty(\varepsilon)$ is a function of the noise level: $p = 0.954$ was learned at $\varepsilon = 0.05$, $p = 0.907$ at $\varepsilon = 0.10$, and $p = 0.761$ at $\varepsilon = 0.15$. The off-diagonal entries (e.g., $p = 0.761$ at $\varepsilon = 0.05$) are hypothetical combinations that do not arise in the experiments---a cooperative prior that low would not be learned at that noise level. The empirically realised conditions lie on the diagonal: $(0.954, 0.05)$, $(0.907, 0.10)$, and $(0.761, 0.15)$. On this diagonal, the gap is 0 for the two noise levels where cooperation is still functioning ($\varepsilon = 0.05$ and $\varepsilon = 0.10$), and 1 only at $\varepsilon = 0.15$ where cooperation has already collapsed and the forgiveness threshold is moot. The binary approximation is therefore exact in every regime where it is operationally relevant.

The off-diagonal entries remain informative as a robustness check: they confirm that even under hypothetical prior--noise mismatches, the gap never exceeds 1. The binary test is tightest when $p$ is large (the binomial prior concentrates mass near $\tau = h$, so $H_0$ dominates the non-cooperative aggregate) and when $\varepsilon$ is small (all non-cooperative hypotheses assign similar likelihoods to small $k$). The largest gaps occur at lower $p$ and higher $\varepsilon$---precisely near the phase transition---where the intermediate hypotheses receive meaningful prior weight and assign substantially higher likelihood to small $k$ than $H_0$ does.

The binary partition is not merely a convenient approximation but reflects the game's strategic structure. For two rational agents in the IPD, the payoff structure drives play away from the asymmetric states (CD, DC) and toward the symmetric equilibria: mutual cooperation (CC) or mutual defection (DD). The mixed hypotheses $H_\tau$ for intermediate $\tau$ correspond to trajectories that pass through these asymmetric states---precisely the trajectories that rational play discourages. This game-theoretic argument is complemented by the bistable equilibrium structure: converged dynamics concentrate opponent behaviour at sustained cooperation or sustained defection, with mixed sequences confined to transient episodes near the basin boundary. The extended analysis confirms that the binary approximation is tight precisely because the game structure it approximates is itself approximately binary.

\subsection*{State-Dependent Priors and the Belief-Driven Collapse}

Execution noise does not directly alter the intention state. When both agents intend to cooperate, the true hidden state remains CC regardless of the noisy observation because the A matrix absorbs noise at the likelihood level. The system exits CC only when an agent's posterior falls below the attribution boundary, causing it to genuinely change its intention to defect.

CC is the only forgiving state at every noise level. Under myopic attribution, once a belief-driven retaliation shifts the true state out of CC, any subsequently observed defection is attributed to hostile intent. The cascade proceeds as follows: the focal agent's belief tips below $0.5$, it genuinely intends to defect, the true state shifts to DC, and from DC the cooperative prior is far below the forgiveness threshold. The opponent observes this genuine defection, attributes hostile intent, and retaliates. Each step lands the system in a state further from the attribution boundary, with no myopic path back to CC. These retaliatory episodes inject defection-labelled transitions into the Dirichlet counts, actively eroding $p_t(\text{CC})$ and making the collapse progressively more permanent.

\subsection*{Competing Effects of the Planning Horizon}

\Cref{prop:hstep} reveals two effects of the planning horizon that work in opposite directions. Evidence aggregation is beneficial: $k^*$ is non-decreasing in $h$, so longer windows tolerate more individual noisy defections. Intent persistence is costly: $p_t(\text{CC})^h$ decays exponentially in $h$, so longer windows are less likely to contain sustained cooperative intent. \Cref{tab:hstep-sweep} traces both quantities across $h = 1$ to $10$.

\begin{table}[h]
  \centering
  \caption{Maximum forgivable defection count $k^*$ and cooperative survival probability $p_{500}(\text{CC})^h$ as a function of planning horizon $h$ (POMDP, self-play).}
  \label{tab:hstep-sweep}
  \begin{tabular}{r cc cc cc}
    \toprule
        & \multicolumn{2}{c}{$\varepsilon = 0.05$} & \multicolumn{2}{c}{$\varepsilon = 0.10$} & \multicolumn{2}{c}{$\varepsilon = 0.15$}                                     \\
    \cmidrule(lr){2-3} \cmidrule(lr){4-5} \cmidrule(lr){6-7}
    $h$ & $k^*$                                    & $p_{500}^h$                              & $k^*$                                    & $p_{500}^h$ & $k^*$ & $p_{500}^h$ \\
    \midrule
    1   & 1                                        & 0.954                                    & 1                                        & 0.907       & 0     & 0.761       \\
    2   & 1                                        & 0.910                                    & 1                                        & 0.823       & 1     & 0.579       \\
    3   & 1                                        & 0.868                                    & 1                                        & 0.746       & 1     & 0.441       \\
    4   & 2                                        & 0.828                                    & 2                                        & 0.677       & 1     & 0.335       \\
    5   & 2                                        & 0.790                                    & 2                                        & 0.614       & 2     & 0.255       \\
    6   & 3                                        & 0.754                                    & 3                                        & 0.557       & 2     & 0.194       \\
    7   & 3                                        & 0.719                                    & 3                                        & 0.505       & 2     & 0.148       \\
    8   & 4                                        & 0.686                                    & 3                                        & 0.458       & 3     & 0.113       \\
    9   & 4                                        & 0.655                                    & 4                                        & 0.415       & 3     & 0.086       \\
    10  & 5                                        & 0.624                                    & 4                                        & 0.377       & 4     & 0.065       \\
    \bottomrule
  \end{tabular}
\end{table}

At low noise, $k^*$ grows steadily while $p_{500}^h$ decays gently, and the survival probability remains above $0.5$ throughout. At moderate noise, $k^*$ follows a similar trajectory but $p_{500}^h$ crosses below $0.5$ between $h = 7$ and $h = 8$, so that at the self-play planning horizon of $h = 5$ the survival probability is above the feasibility threshold but with a margin that the myopic analysis would not reveal. At high noise the two effects diverge most starkly: the myopic case yields $k^* = 0$, multi-step planning recovers $k^* = 2$ at $h = 5$, yet $p_{500}^h$ falls below $0.5$ already at $h = 3$. The forgiveness machinery is accurate but irrelevant because the opponent is more likely than not to have genuinely defected somewhere in the planning window. This is the central asymmetry: longer horizons make the agent better at telling noise from hostility but exponentially less likely to face cooperative intent in the first place.

The critical threshold $\tilde{\varepsilon}(h)$ is the noise level at which $p_\infty(\varepsilon)^h = 0.5$. For $h = 1$ this recovers the myopic condition. For $h > 1$ the two competing effects interact: evidence aggregation pushes the threshold upward while intent persistence pulls it downward. From the empirical priors, the threshold lies between $\varepsilon = 0.10$ and $\varepsilon = 0.15$ for $h = 5$, consistent with the phase transition in \cref{fig:scores,fig:coop-rate}.

\subsection*{Empirical Correspondence}

The empirical cooperation rates coincide closely with the cooperative survival probabilities at all three noise levels (\cref{tab:hstep-correspondence}).

\begin{table}[h]
  \centering
  \caption{Cooperative survival probability $p_{500}(\text{CC})^5$ versus empirical cooperation rate (POMDP, self-play, $h = 5$).}
  \label{tab:hstep-correspondence}
  \begin{tabular}{cccc}
    \toprule
    $\varepsilon$ & $p_{500}(\text{CC})^5$ & Empirical cooperation & Difference \\
    \midrule
    0.05          & 0.790                  & 0.80                  & $+0.010$   \\
    0.10          & 0.614                  & 0.60                  & $-0.014$   \\
    0.15          & 0.255                  & 0.25                  & $-0.005$   \\
    \bottomrule
  \end{tabular}
\end{table}

This is consistent with an interpretation in which the converged cooperation rate equals the probability that cooperative intent persists across the planning window: the agent cooperates when the cooperative hypothesis dominates its $h$-step lookahead, and the frequency with which this occurs is determined by $p_t(\text{CC})^h$. However, three co-monotonic data points are insufficient to distinguish this relationship from other monotonic functions of the same quantities, and verification at intermediate noise levels would be needed to establish whether the correspondence is structural.

\subsection*{The Role of Epistemic Probing}
\label{app:epistemic-probing}

\Cref{prop:hstep} assumes a fixed cooperative prior and characterises when multi-step evidence aggregation sustains cooperation. Two mechanisms operate beyond this bound.

\paragraph{Active restoration of the cooperative prior.}
The expected free energy's parameter information gain term provides a stabilising force that the proposition's fixed-prior assumption excludes. When the agent's belief is displaced from CC into an uncertain non-CC state, the Dirichlet counts for that state are sparse relative to the well-visited CC state. Cooperation from these uncertain states is therefore highly informative about the transition model, producing a large parameter information gain that favours cooperative actions precisely when the pragmatic case for cooperation is weakest. This epistemic incentive creates a recovery pathway that pure evidence aggregation cannot provide: it drives the agent to test whether the opponent will reciprocate from non-CC states, generating the cooperative transitions needed to restore the system toward CC. The explore-then-commit trajectory documented in \cref{sec:efe-decomp} reflects this mechanism during initial learning; near the noise threshold, the same mechanism operates in miniature each time a noisy observation displaces the belief state.

The distinction is that the proposition evaluates whether the agent should cooperate given its current model, while epistemic probing changes the model itself. The proposition provides a lower bound on the conditions under which cooperation is sustained; epistemic probing can extend cooperation into regions where the bound is not satisfied by restoring $p_t(\text{CC})$ after transient retaliatory episodes.

\paragraph{Stochastic variance in the cooperative prior.}
The proposition treats $p_t(\text{CC})$ as converged, but the Dirichlet counts fluctuate as evidence accumulates. A transient upward fluctuation can temporarily satisfy the forgiveness condition even when the converged value would not, and a downward fluctuation can trigger retaliation even when the converged value should sustain cooperation. This variance is visible in the wide confidence intervals at intermediate noise levels in \cref{fig:scores,fig:coop-rate} and contributes to the bimodal character of the phase transition across seeds. A stochastic analysis treating $p_t$ as a random variable is left to future work.

\paragraph{Why the basin shrinks with noise.}
Both mechanisms lose effectiveness as $\varepsilon$ increases. The CC forgiveness margin narrows, so the planning horizon and epistemic drive must compensate for stronger hostile posteriors more frequently. The rate of challenging observations increases ($1-(1-\varepsilon)^2$), so the basin is tested more often. At $\varepsilon = 0.15$, CC has crossed into the hostile regime under myopic attribution, and neither the planning horizon nor the epistemic incentive can sustain cooperation from a state where the evidence so strongly favours retaliation. \Cref{tab:multistep-verification} confirms that the failure is not one of discriminability but of cooperative survival.

\subsection*{Why the MDP Exhibits No Sharp Threshold}

The attribution boundary characterises when Bayesian updating over the hidden intention variable flips from forgiving to hostile. The MDP has no latent intention layer and treats executed actions as states, so neither \cref{eq:attribution} nor \cref{prop:hstep} applies. Without a latent intention space there is no cooperative hypothesis $H_C$ to evaluate.

\begin{table}[h]
  \centering
  \caption{MDP cooperative priors (self-play, $t = 500$). These reflect transition regularities in executed actions, not latent intentions.}
  \label{tab:mdp-priors}
  \begin{tabular}{ccccc}
    \toprule
    $\varepsilon$ & $p_{500}(\text{CC})$ & $p_{500}(\text{CD})$ & $p_{500}(\text{DC})$ & $p_{500}(\text{DD})$ \\
    \midrule
    0.05          & 0.9129               & 0.5683               & 0.2186               & 0.8839               \\
    0.10          & 0.8294               & 0.7887               & 0.3002               & 0.7311               \\
    0.15          & 0.7795               & 0.7789               & 0.5112               & 0.7537               \\
    \bottomrule
  \end{tabular}
\end{table}

Despite lacking intention inference, the MDP achieves non-trivial cooperation through two features of active inference operating over its state space. First, B-matrix learning captures statistical regularities in executed transitions, even though these conflate intention and noise. Second, the permanently elevated epistemic drive (\cref{sec:efe-decomp}) provides persistent exploratory pressure that prevents the agent from locking into pure defection. The result is an uncommitted, drifting policy that accumulates cooperative episodes through exploration rather than conviction.

The MDP's B-matrix estimates decay smoothly with noise and cooperation declines approximately linearly with no phase transition (\cref{tab:mdp-priors}). The MDP never builds the high cooperative baseline that makes the POMDP's collapse dramatic. Its inability to perform intention inference simultaneously prevents it from reaching the POMDP's cooperative peak and prevents the correlated collapse that mutual intention inference produces. The POMDP's fragility at high noise arises not from inferior inference but from what accurate inference correctly discovers: that the opponent's cooperative intent cannot survive the noise-induced belief cascade.

\subsection*{Contrast with Fixed-Opponent Play}

The self-play collapse is specific to mutual inference. Against TFT, the focal agent's belief-driven retaliation does not corrupt the opponent's behaviour because TFT's policy is stationary regardless of the focal agent's belief dynamics. This dampens the cascade in two ways: TFT's consistent conditional cooperation provides a stable learning target that maintains high $p_t(\text{CC})$ in the Dirichlet counts, and a single retaliatory episode does not trigger a symmetric belief cascade because TFT has no beliefs to corrupt.

In self-play, both agents perform inference. When one agent's belief tips and it retaliates, the other agent correctly identifies genuine hostile intent and retaliates in turn. The spiral is not a failure of inference but a consequence of its accuracy. Against TFT, noise can only produce transient perturbations; in self-play, noise can trigger a self-reinforcing transition from cooperative to hostile equilibrium via the belief-intent pathway.

The $h$-step analysis locates this asymmetry precisely. Against a fixed opponent, $p_t(\text{CC})$ is determined by the opponent's stationary policy and is not degraded by the focal agent's belief dynamics, so the cooperative survival probability $p_t(\text{CC})^h$ remains high and stable. In self-play, $p_t(\text{CC})$ is a function of both agents' inference processes, and noise-induced retaliation by one agent degrades the other's Dirichlet priors. The same quantity that \cref{prop:hstep} identifies as governing cooperation feasibility is precisely the quantity that mutual inference destabilises.

\section{Prior Sensitivity}
\label{app:prior-sensitivity}

To evaluate sensitivity to the Dirichlet initialization, we compare three priors in self-play across three noise levels: a \emph{uniform} prior (the default), a \emph{reciprocity-favouring} prior (initialized toward conditional cooperation), and a \emph{defection-favouring} prior. All other settings match the main experiments (30 repetitions). \Cref{tab:prior-sensitivity} reports the mutual cooperation rates and mean per-turn scores.

\begin{table}[h]
  \centering
  \caption{Prior sensitivity in self-play: mutual cooperation rate and mean per-turn score for three Dirichlet initialisations across noise levels (30 repetitions). Per-turn score lies between the mutual-defection payoff ($1$) and the mutual-cooperation payoff ($3$).}
  \label{tab:prior-sensitivity}
  \begin{tabular}{l ccc ccc}
    \toprule
                         & \multicolumn{3}{c}{Cooperation rate} & \multicolumn{3}{c}{Score per turn}                                  \\
    \cmidrule(lr){2-4}\cmidrule(lr){5-7}
    Prior                & $\varepsilon{=}0.05$ & $0.10$ & $0.15$ & $\varepsilon{=}0.05$ & $0.10$ & $0.15$ \\
    \midrule
    Uniform              & 0.79 & 0.62 & 0.26 & 2.84 & 2.69 & 2.11 \\
    Reciprocity-favouring & 0.87 & 0.80 & 0.71 & 2.91 & 2.87 & 2.81 \\
    Defection-favouring  & 0.89 & 0.71 & 0.14 & 2.92 & 2.71 & 1.71 \\
    \bottomrule
  \end{tabular}
\end{table}

While the overall trends are robust, the prior's impact increases sharply near the critical noise threshold (\cref{sec:threshold}). At low noise ($\varepsilon$ = 0.05), all priors sustain high cooperation (0.79 to 0.89); these minor differences suggest the prior primarily affects convergence speed rather than the final outcome. At $\varepsilon$ = 0.10, the spread widens and the reciprocity-favouring prior shows a clear advantage. By $\varepsilon$ = 0.15, the prior becomes decisive: the reciprocity-favouring prior sustains cooperation (0.71), while the uniform and defection-favouring priors collapse (0.26 and 0.14).

Consistent with our threshold analysis, a prior aligned with the opponent's conditionally cooperative structure extends the noise range that supports cooperation. Specifically, it increases the initial cooperative survival probability, $p_t(CC)^h$, which delays the fixed-point collapse (\cref{eq:fixed-point}). Although the defection-favouring prior is competitive at low noise, its fragility near the threshold indicates that robust cooperation requires encoding reciprocity, not merely utilizing a strong prior.

\section{Notation and Correspondence to Bayesian ML}
\label{app:notation}

\Cref{tab:notation} relates the AIF notation used in the main text to its standard Bayesian ML and Reinforcement Learning (RL) counterparts. The two formalisms coincide: AIF's generative model is a controlled HMM, perception is variational inference, and the expected free energy is an exploration-aware planning objective.

\begin{table}[h]
  \centering
  \caption{Active-inference symbols and their standard probabilistic ML and RL analogues.}
  \label{tab:notation}
  \small
  \begin{tabular}{l p{4.5cm} p{5.5cm}}
    \toprule
    Symbol                      & Active-inference meaning                  & Probabilistic ML / RL analogue                                 \\
    \midrule
    \multicolumn{3}{c}{\textit{States \& Variables}} \\
    \midrule
    $s_t$                       & Hidden intention state                    & Latent state (HMM)                                             \\
    $o_t$                       & Executed action (observation)             & Observed emission                                              \\
    $\pi,\ a_t$                 & Policy / action                           & Action sequence / control input                                \\
    \midrule
    \multicolumn{3}{c}{\textit{Generative Model}} \\
    \midrule
    $\mathbf{A}$                & Observation model $P(o\mid s)$            & Emission / likelihood matrix                                   \\
    $\mathbf{B}$                & Transition model $P(s'\mid s,a)$          & Transition matrix                                              \\
    $\tilde{P}(o)$              & Preferred observations (prior preference) & Target / goal distribution (reward as log-prior)               \\
    \midrule
    \multicolumn{3}{c}{\textit{Inference \& Planning Objectives}} \\
    \midrule
    $Q(\cdot)$                  & Approximate posterior                     & Variational posterior                                          \\
    $F$                         & Variational free energy                   & Negative ELBO                                                  \\
    $G(\pi)$                    & Expected free energy                      & Planning objective (expected reward $+$ exploration bonus)     \\
    Epistemic value / info gain & Uncertainty-reducing drive                & Expected information gain / intrinsic exploration bonus        \\
    Pragmatic value             & Preference-seeking drive                  & Expected log-likelihood of preferred outcomes ($\approx$ extrinsic reward) \\
    \midrule
    \multicolumn{3}{c}{\textit{Online Learning}} \\
    \midrule
    $N^a_{ss'}$                 & Dirichlet transition counts               & Sufficient statistics / pseudo-counts                          \\
    $\alpha_0$                  & Dirichlet concentration                   & Smoothing (add-one) prior strength                             \\
    \bottomrule
  \end{tabular}
\end{table}

\end{document}